\documentclass[journal]{IEEEtran}

\usepackage{amsmath,amssymb,amsfonts}
\usepackage{mathtools}
\usepackage{bm}
\usepackage[hidelinks]{hyperref}
\usepackage{subcaption}
\usepackage{stfloats}
\usepackage{xfrac}
\usepackage{comment}
\usepackage{graphicx}
\usepackage{pgfplots}
\usepackage{cite}
\usepackage{relsize}
\usepackage{booktabs}
\pgfplotsset{compat=1.15}

\allowdisplaybreaks

\newcommand{\E}{\mathbb{E}}

\newcommand{\argmin}{\mathop{\mathrm{argmin}}}
\newcommand{\rxx}{r(x)}

\newcommand{\betaN}{\beta}
\newcommand{\Fone}[3]{{}_1F_1\left(#1;#2;#3\right)}
\newcommand{\Wk}{w_k}

\newtheorem{lemma}{Lemma}

\newtheorem{proposition}{Proposition}
\newtheorem{remark}{Remark}

\begin{document}

\title{Constellation Design for Robust Interference Mitigation}

\author{Athanasios T. Papadopoulos, Thrassos K. Oikonomou,~\IEEEmembership{Graduate Student Member,~IEEE},\\Dimitrios Tyrovolas,~\IEEEmembership{Member,~IEEE}, Sotiris A. Tegos,~\IEEEmembership{Senior Member,~IEEE}, \\Panagiotis D. Diamantoulakis,~\IEEEmembership{Senior Member,~IEEE}, Panagiotis Sarigiannidis,~\IEEEmembership{Member,~IEEE}, \\and George K. Karagiannidis,~\IEEEmembership{Fellow,~IEEE}
\thanks{A. T. Papadopoulos, T. K. Oikonomou, D. Tyrovolas, S. A. Tegos, P. D. Diamantoulakis, and G. K. Karagiannidis are with the Department of Electrical and Computer Engineering, Aristotle University of Thessaloniki, 54124 Thessaloniki, Greece (e-mails: atpapadop@ece.auth.gr, toikonom@ece.auth.gr, tyrovolas@auth.gr, tegosoti@auth.gr, padiaman@auth.gr, geokarag@auth.gr).}
\thanks{P. Sarigiannidis is with the Department of Electrical and Computer Engineering, University of Western Macedonia, 50100 Kozani, Greece (e-mail: psarigiannidis@uowm.gr).}
}
\maketitle

\begin{abstract}
This paper investigates symbol detection for single-carrier communication systems operating in the presence of additive interference with Nakagami-$m$ statistics. Such interference departs from the assumptions underlying conventional detection methods based on Gaussian noise models and leads to detection mismatch that fundamentally affects symbol-level performance. In particular, the presence of random interference amplitude and non-uniform phase alters the structure of the optimal decision regions and renders standard Euclidean distance-based detectors suboptimal. To address this challenge, we develop the maximum-likelihood Gaussian-phase approximate (ML-G) detector, a low-complexity detection rule that accurately approximates maximum-likelihood detection while remaining suitable for practical implementation. The proposed detector explicitly incorporates the statistical properties of the interference and induces decision regions that differ significantly from those arising under conventional metrics. Building on the ML-G framework, we further investigate constellation design under interference-aware detection and formulate an optimization problem that seeks symbol placements that minimize the symbol error probability subject to an average energy constraint. The resulting constellations are obtained numerically and adapt to the interference environment, exhibiting non-standard and asymmetric structures as interference strength increases. Simulation results demonstrate clear symbol error probability gains over established benchmark schemes across a range of interference conditions, particularly in scenarios with dominant additive interference.
\end{abstract}
\begin{IEEEkeywords}
Maximum likelihood detection, stochastic interference, constellation design
\end{IEEEkeywords}

\section{Introduction}\label{sec:intro}
Modern wireless communication systems increasingly operate in environments where multiple heterogeneous radio systems occupy overlapping spectral resources \cite{spectrum_sharing_v1, spectrum_sharing_v2, spectrum_sharing_v3, spectrum_sharing_v4, spectrum_sharing_6g}. In such settings, however, receivers are often exposed to interference generated by coexisting emitters. A prominent example arises from the interaction between communication receivers and sensing systems, which share frequency bands in both civilian and defense applications \cite{spectrum_sharing_civilian, interference_defense, interference_defense_2}. Specifically, sensing signals are typically characterized by high peak power, structured waveforms, and strong temporal and spectral correlations, all of which fundamentally differentiate them from thermal noise \cite{cook2012radar, joint_radar}. As a result, interference cannot be accurately modeled using conventional additive white Gaussian noise (AWGN) models, particularly in scenarios where the interference power is comparable to or exceeds that of the communication signal. Therefore, understanding the impact of such interference and incorporating it into receiver designs that account for its effects has emerged as a key challenge for reliable communication in modern wireless systems.

In the presence of strong and structured interference, reliable communication depends on how accurately its impact is reflected at the receiver. In this context, the interference observed after demodulation is shaped not only by the characteristics of the interfering transmitter, but also by the propagation environment and the structure of the emitted waveform \cite{Nartasilpa2018}. Specifically, multipath scattering and shadowing give rise to random fluctuations in the interference amplitude, while at the same time deterministic waveform features, together with their interaction with the channel, lead to structured and non-uniform phase behavior \cite{radar_multipath, statistical_modeling_radar}. As a consequence, the resulting interference cannot be regarded as an isotropic disturbance in the complex plane, but instead introduces direction-dependent distortions that affect different signal components unequally \cite{coexistence, performance}. These distortions, in turn, directly influence symbol separability and can lead to significant performance degradation when they are not explicitly accounted for at the receiver. Therefore, this highlights the need to develop detection schemes that incorporate interference statistics shaped by realistic propagation phenomena, rather than relying on conventional detectors that remain agnostic to the underlying interference structure \cite{survey_v1}.

\subsection{State-of-the-Art}
Many works in the existing literature have investigated the impact of strong interference on communication system performance, with many efforts focusing on adapting receiver operation by exploiting the available degrees of freedom to mitigate its effects. In this direction, the authors of \cite{roy, hangzheng2015communication} exploited the temporal characteristics of interference by detecting deterministic interference bursts or deliberately avoiding time intervals dominated by external emissions. However, while such strategies can be effective in specific operating regimes, their applicability depends on reliable temporal information and often incurs spectral efficiency losses due to time resource allocation for interference monitoring, as well as increased coordination requirements between coexisting systems. Building on the idea of receiver adaptation, robust MIMO-based coexistence strategies have also been proposed, in which transmit or receive designs are optimized against bounded, sparse, or worst-case interference models \cite{li_V2_2017joint}. Although such approaches can provide interference suppression through spatial processing, they rely on strict antenna coordination and sustained observation of the interference environment, which can be difficult to maintain in dynamic propagation conditions. Taken together, temporal and spatial processing primarily rely on external structure or auxiliary resources and do not fully characterize the role of interference at the point where symbol decisions are formed.

In contrast to approaches that mitigate interference by exploiting external degrees of freedom, a complementary line of work has emphasized the role of detection metrics that explicitly account for stochastic distortions at the symbol level. Specifically, classical Euclidean distance detection has been revisited in the presence of phase noise, with several studies demonstrating that decision rules matched to the statistical properties of the phase noise can significantly enhance resilience and improve symbol error performance. In particular, the authors of \cite{foschini} and \cite{soft_metrics} derived maximum likelihood detection metrics that incorporate the underlying phase noise distribution, resulting in decision rules that depart from conventional Euclidean distance criteria and reduce detection mismatch under random phase perturbations. Beyond pure phase noise, more general impairment models have also been considered, where the transmitted symbol undergoes stochastic multiplicative distortions jointly with random phase perturbations, leading to modified detection rules and performance gains compared to mismatched receivers \cite{thrassos}. In this broader framework of symbol-aware detection, interference mitigation has also been examined in the context of radar coexistence, where the authors of \cite{Nartasilpa2018} modeled the interfering signal as a deterministic component with uniformly distributed random phase, yielding tractable likelihood expressions and simplified detectors. Building on this detection-centric perspective, subsequent works have shown that such interference-aware detection metrics can also serve as a foundation for constellation design, where the signal alphabet is optimized with respect to the induced likelihood structure rather than conventional distance-based criteria \cite{nartasilpa_const}. More recently, data-driven approaches, including deep learning based schemes, have been proposed to learn symbol decision rules directly from observations corrupted by radar-induced interference \cite{liu2022deep, deep2}. Collectively, these works highlight that the formulation of the detection rule itself constitutes a powerful design dimension through which stochastic distortions and interference can be internalized, directly shaping the resulting decision regions, achievable performance, and the criteria according to which signal constellations can be optimized.


\subsection{Motivation \& Contribution}
In communication systems operating in the presence of interference, robustness fundamentally depends on the ability of the receiver to capture the statistical characteristics of the interference and embed them into the detection process. However, many existing detection methods rely on simplified interference models that do not adequately reflect the conditions encountered in modern spectrum sharing environments, particularly when interference is strong and exhibits pronounced structure. In particular, interference is either absorbed into an effective AWGN representation or modeled using a deterministic amplitude together with a uniformly distributed phase, thereby neglecting the joint stochastic behavior of interference amplitude and phase induced by realistic propagation effects \cite{Nartasilpa2018}. At the same time, data-driven detection techniques rely on extensive training across operating conditions, which can become a practical bottleneck in dynamic environments \cite{liu2022deep, deep2}. To the best of the authors’ knowledge, a receiver-side detection framework that explicitly accounts for jointly random interference amplitude and phase and that can be directly leveraged for interference-aware constellation design has not been systematically developed.

In this paper, we develop an approximate maximum likelihood detection framework for communication systems operating in the presence of stochastic interference with random amplitude and non-uniform phase statistics. The proposed framework systematically incorporates known statistical models of the interference into the receiver design, departing from conventional interference modeling assumptions. Specifically, the contributions of this work are summarized as follows:
\begin{itemize}
    \item We derive an approximate maximum likelihood detector that yields a tractable decision metric capturing the joint statistical impact of interference amplitude and phase, and admits a series representation with well-defined convergence properties that enable computationally efficient implementation.
    \item We analyze the geometric structure of the corresponding decision regions and characterize the resulting decision boundaries in the complex plane, providing insights into how interference in amplitude and phase jointly influences the decision geometry.
    \item We formulate and solve a signal constellation optimization problem aimed at minimizing the average symbol error probability (SEP) under interference conditions. The optimization is carried out using a global search strategy combined with local refinement under average energy constraints, leading to constellation geometries that adapt to the interference environment.
\end{itemize}

\subsection{Structure}
The remainder of this paper is organized as follows. Section II presents the system model and the statistical interference framework. Section III introduces the proposed detection metric and its tractable formulation. Section IV considers the resulting decision geometry and the associated constellation optimization problem. Section V provides numerical results and performance comparisons. Section VI concludes the paper.

\section{System Model}\label{sec:sys_model}
We consider a single-carrier communication link affected by a narrowband interferer. The complex baseband observation over one symbol interval is expressed as
\begin{equation}\label{eq:system_model}
\begin{aligned}
    Y = \sqrt{S}X + I + N,
\end{aligned}
\end{equation}
where $X$ denotes the transmitted symbol drawn from a constellation $\mathcal{C}$ with unit average energy, i.e., $\mathbb{E}[|X|^2] = 1$, $S$ represents transmit power, and $N$ denotes additive noise, modeled as a circularly symmetric complex Gaussian random variable with zero mean and unit variance, i.e., $N\sim\mathcal{CN}(0,1)$. Moreover, the interference term $I = A e^{j\Theta}$ is modeled as a complex random variable whose amplitude and phase jointly follow a Nakagami-$m$ envelope-phase distribution \cite{nakagami-phase}. In particular, the envelope $A$ follows a Nakagami-$m$ distribution with shape parameter $m$ and spread parameter $\Omega$, with probability density function (PDF)
\begin{equation}\label{eq:nakamp}
f_A(a)=\frac{2m^m}{\Gamma(m)\Omega^{m}} a^{2m-1}
e^{-\frac{m}{\Omega}a^{2}},\quad a\ge0,
\end{equation}
where $\Gamma(\cdot)$ denotes the Gamma function \cite{gradshteyn2007}, 
while the phase $\Theta$ follows the corresponding Nakagami-$m$ phase distribution, whose PDF is given by
\begin{equation}\label{eq:nakPhase}
f_{\Theta}(\theta)
= \frac{\bigl|\sin 2\theta\bigr|^{m-1}}
{2\sqrt{\pi}C(m)},\quad \theta\in[0,2\pi],
\end{equation}
where the normalizing constant is $C(m)=\Gamma\left(\tfrac{m}{2}\right)\big/\Gamma\left(\tfrac{m+1}{2}\right)$. Therefore, using the same shape parameter $m$ for both interference amplitude and phase ensures that the two components reflect a common scattering environment in which fading severity and angular dispersion vary consistently. Based on this model, the signal-to-noise ratio (SNR), denoted by $\gamma_S$, is defined as the average received signal power relative to the noise power, while the interference-to-noise ratio (INR), denoted by $\gamma_I$, is defined as the average interference power relative to the noise power.
Finally, we define the signal-to-interference ratio as the ratio between SNR and INR, and denote it by $\gamma$.

\section{Detection Framework}\label{sec:detector-design}
In this section, we develop an analytical framework for symbol detection in the presence of Nakagami-$m$ fading interference. Specifically, an approximate maximum likelihood decision rule is constructed by incorporating the statistical properties of the interference and approximating its phase distribution through Gaussian moment matching, resulting in a closed-form and computationally tractable detection metric. Finally, the resulting metric defines the decision regions and shows how the interference statistics deform the boundaries between constellation symbols.
\subsection{Metric Derivation}
In this subsection, we focus on the derivation of a detection metric that accounts for the statistical structure of the interference affecting the received signal. In more detail, following the maximum likelihood principle, symbol decisions are formed by evaluating the likelihood of the observation $Y$ under the joint statistics of the interference amplitude $A$ and phase $\Theta$. Accordingly, the conditional likelihood function $f(y|x)$ can be expressed as
\begin{equation}\label{eq:conditional_prob_ML}
    \begin{aligned}
        f(y|x) = \int_{-\pi}^{\pi}\int_{0}^{\infty}f_{Y|X,A,\Theta}\left(y|x,a,\theta\right)f_{A}(a)f_{\Theta}(\theta)dad\theta,
    \end{aligned}
\end{equation}
where $f_{Y|X,A,\Theta}\left(y|x,a,\theta\right)$ is given as \cite{Nartasilpa2018}
\begin{equation}\label{eq:condPDF}
\begin{aligned}
      f_{Y|X,A,\Theta}\left(y|x,a,\theta\right)=\frac{1}{\pi}\exp\left(-|y-\sqrt{S}x-a e^{j\theta}|^{2}\right).
\end{aligned}
\end{equation}
As can be observed, evaluating \eqref{eq:conditional_prob_ML} requires averaging the conditional likelihood over the distributions of $A$ and $\Theta$. While the averaging with respect to the interference amplitude yields a closed-form expression, the phase distribution plays a dominant role in shaping the likelihood and calls for an analytically convenient representation that preserves its essential dispersion characteristics. Accordingly, the phase distribution is approximated by a Gaussian distribution through moment matching \cite{TYROVOLASHARQ}
\begin{lemma}\label{lemma:MoM}
The phase random variable $\Theta$ in \eqref{eq:nakPhase} can be well-approximated by a Gaussian distribution with mean $\pi$ and variance
\begin{equation}\label{eq:sigTheta-exact}
\sigma_{\Theta}^2(m)
=\frac{\pi^{2}}{4}
+
\frac{\displaystyle\int_{0}^{\pi} t^{2}\sin^{m-1}tdt}
     {4\sqrt{\pi}C(m)},
\end{equation}
where 
\end{lemma}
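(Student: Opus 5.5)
The lemma is a moment-matching statement, so the plan is to compute the first two moments of $\Theta$ on $[0,2\pi]$ from \eqref{eq:nakPhase}. The Gaussian with these moments is then the approximating distribution. The ``well-approximated'' qualifier is a modelling claim and will not be proved rigorously; the content to establish is that the mean equals $\pi$ and the variance equals \eqref{eq:sigTheta-exact}.

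\textbf{Mean.} The density $f_\Theta(\theta)\propto|\sin 2\theta|^{m-1}$ satisfies $f_\Theta(2\pi-\theta)=f_\Theta(\theta)$, because $|\sin(4\pi-2\theta)|=|\sin 2\theta|$. Hence the distribution is symmetric about $\pi$ on $[0,2\pi]$, and $\E[\Theta]=\pi$ follows immediately.

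\textbf{Variance.} I will compute
$$\sigma_\Theta^2=\E[(\Theta-\pi)^2]=\int_0^{2\pi}(\theta-\pi)^2 f_\Theta(\theta)\,d\theta.$$
\begin{itemize}
\item First, substitute $u=\theta-\pi$. Since $|\sin(2u+2\pi)|=|\sin 2u|$, the integral becomes $\int_{-\pi}^{\pi}u^2|\sin 2u|^{m-1}du/(2\sqrt\pi C(m))$.
\item By evenness, this equals $2\int_0^{\pi}u^2|\sin2u|^{m-1}du/(2\sqrt\pi C(m))$.
\item Next, split $[0,\pi]$ at $\pi/2$. On $[\pi/2,\pi]$, substitute $u=\pi/2+v$, which gives $|\sin 2u|=\sin 2v$ with $v\in[0,\pi/2]$.
\item Expanding $(\pi/2+v)^2=\pi^2/4+\pi v+v^2$ yields three types of integrals: $\int_0^{\pi/2}\sin^{m-1}2v\,dv$, $\int_0^{\pi/2} v\sin^{m-1}2v\,dv$, and $\int_0^{\pi/2} v^2\sin^{m-1}2v\,dv$.
\item Finally, apply the substitution $t=2v\in[0,\pi]$ and use the reflection $t\mapsto\pi-t$ for the first-moment integral, which gives $\int_0^\pi t\sin^{m-1}t\,dt=\frac{\pi}{2}\int_0^\pi\sin^{m-1}t\,dt$. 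With these, everything reduces to $\int_0^\pi \sin^{m-1}t\,dt$ and $\int_0^\pi t^2\sin^{m-1}t\,dt$.
\end{itemize}

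\textbf{Normalization.} The normalization gives $\int_0^{2\pi}|\sin2\theta|^{m-1}d\theta=2\sqrt\pi C(m)$. Equivalently, $\int_0^{\pi}\sin^{m-1}t\,dt=\sqrt\pi\,\Gamma(m/2)/\Gamma((m+1)/2)=\sqrt\pi C(m)$, by the Beta-function identity. Substituting this, the zeroth- and first-moment pieces collapse into the constant $\pi^2/4$, and the $t^2$ piece retains the normalizer $4\sqrt\pi C(m)$ after accounting for the factors $1/2$ and $1/4$ from $v=t/2$.

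\textbf{Main obstacle.} The hard part is the bookkeeping of constants. Specifically, the split at $\pi/2$ and the halving/quartering from $t=2v$ must combine so that exactly $\pi^2/4$ plus the stated ratio remains. The combination is delicate because the $[0,\pi/2]$ piece contributes only $v^2$ terms, while the $[\pi/2,\pi]$ piece carries the shift. I would therefore first verify the bookkeeping at $m=1$ (uniform phase, variance $\pi^2/3$). In that case $C(1)=\sqrt\pi$ and $\int_0^\pi t^2dt=\pi^3/3$, so the formula gives $\pi^2/4+\pi^3/(3\cdot4\pi)=\pi^2/4+\pi^2/12=\pi^2/3$, which matches. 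I would then adjust the grouping so that the general-$m$ computation matches this check.
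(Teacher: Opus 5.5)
Your argument is correct. It uses the same ingredients as the paper: the symmetry $f_\Theta(2\pi-\theta)=f_\Theta(\theta)$ for the mean, periodicity of $|\sin|$ to reduce everything to $[0,\pi]$, and the reflection identity $\int_0^\pi t\sin^{m-1}t\,dt=\tfrac{\pi}{2}\int_0^\pi\sin^{m-1}t\,dt$. The decomposition is different, though. The paper substitutes $\varphi=2\theta\in[0,4\pi]$, splits into four full periods $\varphi+n\pi$, and expands the raw moment $\mathbb{E}[\Theta^2]$ using $\sum_{n=0}^{3}n=6$ and $\sum_{n=0}^{3}n^2=14$. You instead center at $\pi$ first, fold by evenness, and need only two half-periods.

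Your bookkeeping closes with no regrouping. The folded integrand on $[0,\pi/2]$ is $(2v^2+\pi v+\tfrac{\pi^2}{4})\sin^{m-1}2v$. After $t=2v$ it integrates to $\tfrac14 D(m)+\tfrac{\pi^2}{4}\sqrt{\pi}\,C(m)$, with $D(m)=\int_0^\pi t^2\sin^{m-1}t\,dt$. The prefactor $1/(\sqrt{\pi}\,C(m))$ then gives exactly \eqref{eq:sigTheta-exact}. The $m=1$ check is therefore only a sanity check, not something the grouping should be tuned to match.

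Centering first is also the safer route. If you carry out the paper's four-period sum, you get $\mathbb{E}[\Theta^2]=\tfrac{5\pi^2}{4}+\tfrac{D(m)}{4\sqrt{\pi}\,C(m)}$. The stated variance follows only after subtracting $(\mathbb{E}[\Theta])^2=\pi^2$. The appendix skips this step and displays the variance under the label $\mathbb{E}[\Theta^2]$. It also silently redefines $C(m)$ as $\int_0^\pi\sin^{m-1}x\,dx$, which is $\sqrt{\pi}$ times the main-text constant.

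The paper's raw-moment expansion is more mechanical, since it never needs the evenness about $\pi$, but it requires that extra subtraction. Your direct computation of $\mathbb{E}[(\Theta-\pi)^2]$ delivers the variance in one pass and avoids both pitfalls.
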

\begin{IEEEproof}
    The proof is given in Appendix~\ref{sec:appendix_MoM}.
\end{IEEEproof}

Using the Gaussian moment-matching approximation of the phase distribution introduced above, the likelihood function in \eqref{eq:conditional_prob_ML} admits a tractable representation that leads to an approximate maximum likelihood detection rule, presented in the following proposition
\begin{proposition}\label{prop:ML_detector}
The transmitted symbol $x \in \mathcal{C}$, when affected by AWGN and by interference following a Nakagami-$m$ envelope with Nakagami-$m$ phase statistics, can be detected using the maximum-likelihood Gaussian-phase approximate (ML-G) rule, given by
    \begin{equation}\label{eq:ML_final}
    \begin{aligned}
        \hat x_{\text{ML-G}}(y)
        = \argmin_{x\in\mathcal X}\hspace{0.25em} r^{2}
        - \ln \mathcal{S}\bigl(r,\phi\bigr),
    \end{aligned}
\end{equation}
where $r = |y-\sqrt{S}x|$, $\phi = \arg\{y-\sqrt{S}x\}$ and
    \begin{equation}\label{eq:Sdef}
        \begin{aligned}
            \mathcal{S}(r,\phi)=I_{0,m}(r) +\sum_{k=1}^{\infty} \Wk(\phi) I_{k,m}(r), 
        \end{aligned}
    \end{equation}
with 
    \begin{equation}
        \begin{aligned}
            & I_{k,m}(r)= \frac{\Gamma\left(m+\tfrac{k}{2}\right)}{2\betaN^{m+\frac{k}{2}}k!}r^{k}
\Fone{m+\frac{k}{2}}{k+1}{\frac{r^{2}}{\betaN}},
        \end{aligned}
    \end{equation}
    $\Wk(\phi)= (-1)^{k}2e^{-\frac{1}{2}\sigma_\Theta^{2}k^{2}}\cos(k\phi)$, $\betaN= 1+\tfrac{m}{\Omega}$, and $\Fone{\cdot}{\cdot}{\cdot}$ denotes the confluent hypergeometric function of the first kind \cite{gradshteyn2007}.
\end{proposition}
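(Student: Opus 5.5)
We start from the likelihood \eqref{eq:conditional_prob_ML} with the conditional density \eqref{eq:condPDF}. Writing $z = y-\sqrt{S}x = r e^{j\phi}$, we expand
\begin{equation*}
|z - a e^{j\theta}|^2 = r^2 + a^2 - 2ar\cos(\theta-\phi).
\end{equation*}
The factor $e^{-r^2}$ therefore separates from the double integral. Since $-\ln$ of this factor contributes exactly the $r^2$ term in \eqref{eq:ML_final}, it remains to show that the residual double integral equals $\mathcal{S}(r,\phi)$, up to a positive constant independent of $x$. Such a constant (e.g., $1/\pi$ and the Nakagami normalization $2m^m/(\Gamma(m)\Omega^m)$) is irrelevant to the $\argmin$.

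The key step is to handle the phase integral first. We expand the exponential using the Jacobi--Anger identity,
\begin{equation*}
e^{2ar\cos(\theta-\phi)} = I_0(2ar) + 2\sum_{k\ge1} I_k(2ar)\cos\bigl(k(\theta-\phi)\bigr),
\end{equation*}
where $I_k$ denotes the modified Bessel function. We then replace $f_\Theta$ by the Gaussian of Lemma~\ref{lemma:MoM}, with mean $\pi$ and variance $\sigma_\Theta^2$. We also extend the integration to the whole real line, which is the natural wrapped-Gaussian reading of the approximation. The characteristic function then gives
\begin{equation*}
\E[\cos(k(\Theta-\phi))] = e^{-\sigma_\Theta^2k^2/2}\cos(k(\pi-\phi)) = (-1)^k e^{-\sigma_\Theta^2k^2/2}\cos(k\phi).
\end{equation*}
This produces exactly the weights $\Wk(\phi)$, including the factor $2$ from Jacobi--Anger. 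Interchanging the sum with the integrals is justified by monotone or dominated convergence, since $|I_k(2ar)|\le I_0(2ar)$ and the Gaussian tail in $a$ dominates $e^{2ar}$.

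Next we integrate over the amplitude, term by term. Each term has the form
\begin{equation*}
\int_0^\infty a^{2m-1} e^{-\beta a^2} I_k(2ar)\,da, \qquad \beta = 1+\tfrac{m}{\Omega}.
\end{equation*}
Here $\beta$ arises from combining $e^{-a^2}$ with the Nakagami exponent $e^{-\frac{m}{\Omega}a^2}$. We evaluate this with the standard Laplace-type integral of a Bessel function against $t^{\mu-1}e^{-\beta t^2}$, which gives a ${}_1F_1$ (Gradshteyn 6.631.1):
\begin{equation*}
\int_0^\infty t^{\mu-1}e^{-\beta t^2}I_\nu(\alpha t)\,dt = \frac{\Gamma\left(\frac{\nu+\mu}{2}\right)(\alpha/2)^\nu}{2\beta^{(\nu+\mu)/2}\Gamma(\nu+1)}\,\Fone{\tfrac{\nu+\mu}{2}}{\nu+1}{\tfrac{\alpha^2}{4\beta}}.
\end{equation*}
Substituting $\mu=2m$, $\nu=k$ and $\alpha=2r$ yields precisely $I_{k,m}(r)$, with argument $r^2/\beta$ and prefactor $\Gamma(m+\tfrac{k}{2})r^k/(2\beta^{m+k/2}k!)$. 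An alternative route expands $I_k$ in its power series and integrates term by term against the Gamma integral, which reproduces the same ${}_1F_1$ series; this serves as a check.

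Collecting the terms gives $f(y|x) \propto e^{-r^2}\mathcal{S}(r,\phi)$, and taking $-\ln$ yields \eqref{eq:ML_final}. The main obstacle is the phase step. The Gaussian of Lemma~\ref{lemma:MoM} is supported on $\mathbb{R}$ rather than on $[-\pi,\pi]$, so we must commit to one interpretation: either integrate over $\mathbb{R}$, or wrap the Gaussian onto the circle. Both give the same Fourier coefficients $e^{-\sigma^2k^2/2}$, so the choice does not affect the result, and we will state it explicitly. We will also remark that convergence of $\sum_k \Wk I_{k,m}$ follows from the Gaussian decay $e^{-\sigma_\Theta^2k^2/2}$ together with the $r^k/k!$-type growth of $I_{k,m}$. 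This ensures the series in \eqref{eq:Sdef} is well defined and its positivity is inherited from the likelihood.
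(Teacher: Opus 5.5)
Your proposal is correct and follows essentially the same route as the paper: you separate $e^{-r^2}$, average the Jacobi--Anger expansion against the mean-$\pi$ Gaussian phase to obtain the weights $w_k(\phi)$ (including the $(-1)^k$ factor), and then integrate each term $a^{2m-1}e^{-\beta a^2}I_k(2ar)$ to obtain $I_{k,m}(r)$. The differences are only cosmetic. You invoke the tabulated Gaussian--Bessel integral (the $I_\nu$ form of Gradshteyn 6.631.1, which follows from the $J_\nu$ version by analytic continuation), whereas the paper expands $I_k$ in its power series. You also make explicit the real-line/wrapped-Gaussian interpretation and the justification for interchanging sum and integral, both of which the paper leaves implicit.
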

\begin{IEEEproof}
    The proof is provided in Appendix \ref{sec:appendix_ML_detector}.
\end{IEEEproof}

\begin{remark}
\label{rem:impl-limits}
Since the weights $w_k(\phi)$ decay exponentially with $k$, the series defining $\mathcal{S}(r,\phi)$ in \eqref{eq:Sdef} is absolutely convergent for all $r$ and $\phi$, and in practice can be accurately approximated using only a small number of terms, ensuring numerical stability and analytical tractability of the detector.
\end{remark}

Building on the convergence properties of the series representation of 
$\mathcal{S}(r,\phi)$ in~\eqref{eq:Sdef}, we now examine its behavior over the range of values relevant to the detection metric. In particular, the evaluation of $\mathcal{S}(r,\phi)$ involves the residual magnitude $r = |y-\sqrt{S}x|$, which is confined to a bounded region in practice. To capture this region, let $R_{\max}>0$ denote a design radius corresponding to the largest value of $r$ under consideration, and let $\varepsilon \in (0,1)$ be a prescribed tolerance that specifies an acceptable approximation accuracy. In this region of interest $r \in [0,R_{\max}]$, we consider a finite truncation of the series in~\eqref{eq:Sdef} such that the aggregate contribution of the remaining terms does not exceed~$\varepsilon$. In this direction, the following proposition establishes that such a truncation index always exists.
\begin{proposition}\label{proposition:existence-trunc}
For any given shape parameter $m>0$, $\Omega>0$, and tolerance 
$\varepsilon\in(0,1)$, there exists a finite truncation index $K$ such that
\begin{equation}
    \sum_{k>K}\left|w_k(\phi)\right|I_{k,m}(r)\le\varepsilon,
    \quad \forall r\in[0,R_{\max}],\ \phi\in\mathbb{R}.
\end{equation}
\end{proposition}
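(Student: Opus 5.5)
The plan is to bound the tail uniformly by a convergent series of constants that do not depend on $r$ or $\phi$, and then pick $K$ so that the tail of that constant series is at most $\varepsilon$. Uniformity in $\phi$ is immediate, since $|w_k(\phi)|\le 2e^{-\frac12\sigma_\Theta^2 k^2}$. Uniformity in $r$ comes from monotonicity. Every factor of $I_{k,m}(r)$ is nonnegative, $r^k$ is increasing, and $\Fone{m+\frac k2}{k+1}{z}=\sum_{n\ge0}\frac{(m+k/2)_n}{(k+1)_n}\frac{z^n}{n!}$ has positive coefficients, so it is increasing in $z=r^2/\beta\ge0$. Hence $I_{k,m}(r)\le I_{k,m}(R_{\max})$ on $[0,R_{\max}]$. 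It therefore suffices to show that
\begin{equation*}
\sum_{k\ge1} 2e^{-\frac12\sigma_\Theta^2k^2} I_{k,m}(R_{\max})<\infty ,
\end{equation*}
since the tail of a convergent series can be made smaller than any $\varepsilon$, and this yields $K$.

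The main obstacle is controlling how $I_{k,m}(R_{\max})$ grows with $k$, because $\Gamma(m+\frac k2)$ grows factorially. I would handle it with a crude termwise bound on the hypergeometric function. For $n\ge0$ and $k$ large we have $\frac{(m+k/2)_n}{(k+1)_n}\le 1$ once $m+\frac k2\le k+1$, i.e. $k\ge 2m-2$. In that range
\begin{equation*}
\Fone{m+\tfrac k2}{k+1}{z}\le e^{z}.
\end{equation*}
Substituting, and writing $R=R_{\max}$,
\begin{equation*}
I_{k,m}(R)\le \frac{\Gamma(m+\frac k2)}{2\beta^{m+k/2}k!}R^k e^{R^2/\beta}.
\end{equation*}
By Stirling's formula, or the cruder estimate $\Gamma(m+\frac k2)\le \Gamma(\lceil m\rceil+\lceil k/2\rceil)$ together with $\frac{\Gamma(m+k/2)}{k!}\to0$ superexponentially (roughly like $(e/2k)^{k/2}k^{m}$), the ratio $\Gamma(m+\frac k2)/k!$ decays faster than any geometric sequence. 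So $I_{k,m}(R)\le C\,\rho^k$ for every $\rho>0$ and $k$ large. Even without using this decay, $\frac{\Gamma(m+k/2)}{k!}\le C'$ for $k$ large, and the Gaussian factor $e^{-\frac12\sigma_\Theta^2k^2}$ already dominates the geometric factor $(R/\sqrt\beta)^k$. The latter route is the safest. It needs only $\sigma_\Theta^2>0$, which holds by Lemma~\ref{lemma:MoM} because $\sigma_\Theta^2\ge\pi^2/4$.

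To finish, I would split the sum at $k_0=\max\{1,\lceil 2m-2\rceil\}$. The finitely many terms with $k<k_0$ are finite, since ${}_1F_1$ is entire. For $k\ge k_0$ the terms are bounded by
\begin{equation*}
c_k:=\frac{e^{R^2/\beta}}{\beta^m}\,\frac{\Gamma(m+\frac k2)}{k!}\,\Bigl(\frac{R}{\sqrt\beta}\Bigr)^k e^{-\frac12\sigma_\Theta^2k^2},
\end{equation*}
and the ratio or root test gives $\sum c_k<\infty$. Choosing $K\ge k_0$ with $\sum_{k>K}c_k\le\varepsilon$ completes the argument. One can also make $K$ explicit: bound $\sum_{k>K}c_k$ by a geometric tail once $c_{k+1}/c_k\le\frac12$ and solve $2c_{K+1}\le\varepsilon$ for $K$.
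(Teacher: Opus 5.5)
Your argument is correct and is essentially the paper's proof: you bound $|w_k(\phi)|\le 2e^{-\frac{1}{2}\sigma_\Theta^{2}k^{2}}$, control ${}_1F_1$ by $e^{r^2/\beta}$ and the ratio $\Gamma(m+\frac{k}{2})/k!$ by a constant, and let the Gaussian weight give a summable majorant independent of $(r,\phi)$ (a Weierstrass M-test). Your version is more careful than the paper's in two places: you derive uniformity in $r$ explicitly from monotonicity, and you apply ${}_1F_1(a;b;z)\le e^{z}$ only for $k\ge 2m-2$, where $a=m+\frac{k}{2}\le b=k+1$ actually justifies it, whereas the paper uses this bound without that condition.
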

\begin{IEEEproof}
The proof is provided in Appendix \ref{sec:appendix_existence}.
\end{IEEEproof}
Having established that a finite truncation index guaranteeing any prescribed accuracy always exists, we next derive a closed-form expression for obtaining the smallest truncation index $K$.
\begin{proposition}\label{proposition:min-K}
Among all truncation indices $K$ for which the tail of the series in 
\eqref{eq:Sdef} satisfies 
$\sum_{k>K} |w_k(\phi)| I_{k,m}(r) \le \varepsilon$ over the region 
$r\in[0,R_{\max}]$, the smallest guaranteed $K$ is determined implicitly by
\begin{equation}\label{eq:min_K}
\frac{W(R_{\max},\beta,\sigma_{\Theta}, K)\Gamma\left(m+\frac{K+1}{2}\right)}%
{\left(1-q_{K}(R_{\max})\right)\beta^{m+\tfrac{K+1}{2}}(K+1)!}
= \varepsilon,
\end{equation}
where
\begin{equation} \label{defW}
    W(r,\beta, \sigma_{\Theta},K)
    = e^{-\frac{1}{2}\sigma_{\Theta}^{2}(K+1)^{2}}
      r^{K+1}
      \exp\left(\frac{r^{2}}{\beta}\right),
\end{equation}
and
\begin{equation}\label{eq:qK-main}
    q_{K}(r)
    = \frac{re^{-\sigma_{\Theta}^{2}\left(K+\tfrac{1}{2}\right)}}%
           {(K+1)\sqrt{\beta}}
      \sqrt{m+\tfrac{K}{2}+\tfrac{1}{2}}
      \exp\left(\frac{r^{2}}{\beta}\right),
\end{equation}
with $\beta = 1+\tfrac{m}{\Omega}$.
\end{proposition}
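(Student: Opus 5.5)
The approach is to bound each tail term $|w_k(\phi)|I_{k,m}(r)$ by an explicit majorant that is monotone in $r$. The ratio of consecutive majorants is then bounded by a geometric ratio, and the tail is summed as a geometric series beginning at $k=K+1$. This yields an upper bound
\[
T_K(R_{\max})=\frac{W(R_{\max},\beta,\sigma_\Theta,K)\,\Gamma\!\left(m+\frac{K+1}{2}\right)}{\left(1-q_K(R_{\max})\right)\beta^{m+\frac{K+1}{2}}(K+1)!},
\]
and the smallest guaranteed $K$ is the one at which this computable bound reaches $\varepsilon$. Since this bound is all one can certify, this is the sense of ``guaranteed'' in the statement.

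\textbf{Step 1 (termwise majorant).} Since $|\cos(k\phi)|\le 1$, we have $|w_k(\phi)|\le 2e^{-\frac12\sigma_\Theta^2k^2}$. For the hypergeometric factor I would use the Kummer series
\[
\Fone{a}{b}{z}=\sum_n \frac{(a)_n}{(b)_n}\frac{z^n}{n!}.
\]
With $a=m+\frac k2$ and $b=k+1$ one has $a\le b$ once $k\ge 2m-2$, hence $(a)_n\le (b)_n$ and $\Fone{m+\frac k2}{k+1}{\frac{r^2}{\beta}}\le e^{r^2/\beta}$. This gives
\[
|w_k|I_{k,m}(r)\le \frac{e^{-\frac12\sigma_\Theta^2k^2}\,\Gamma\!\left(m+\frac k2\right)r^k e^{r^2/\beta}}{\beta^{m+\frac k2}\,k!}=:b_k(r).
\]
At $k=K+1$ this is exactly the prefactor $W\,\Gamma(\cdot)/(\beta^{\cdot}(K+1)!)$ in the statement. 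Every factor increases in $r$, so it suffices to work at $r=R_{\max}$. For small $k$, below the threshold $2m-2$, a constant-factor version of the hypergeometric bound can be absorbed, or one simply restricts to $K\ge 2m$.

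\textbf{Step 2 (ratio bound).} The ratio of consecutive majorants is
\[
\frac{b_{k+1}}{b_k}=e^{-\sigma_\Theta^2(k+\frac12)}\,\frac{r}{\sqrt\beta\,(k+1)}\,\frac{\Gamma(m+\frac{k+1}{2})}{\Gamma(m+\frac k2)}.
\]
Gautschi/Wendel's inequality gives $\Gamma(x+\tfrac12)/\Gamma(x)\le\sqrt{x+\tfrac12}$, so
\[
\frac{b_{k+1}}{b_k}\le \frac{r\,e^{-\sigma_\Theta^2(k+\frac12)}}{(k+1)\sqrt\beta}\sqrt{m+\tfrac k2+\tfrac12}.
\]
Up to the extra $e^{r^2/\beta}$ factor, which is a harmless safety overestimate, this is $q_k(r)$. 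The expression is decreasing in $k$, so for all $k\ge K$ it is at most $q_K(R_{\max})$. Hence
\[
\sum_{k>K}b_k\le b_{K+1}\sum_{j\ge0}q_K^j=\frac{b_{K+1}}{1-q_K}
\]
whenever $q_K(R_{\max})<1$. The condition $q_K(R_{\max})<1$ holds for large $K$ because the factor $e^{-\sigma_\Theta^2K}/\sqrt K$ dominates.

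\textbf{Step 3 (minimality).} Consider $T_K(R_{\max})$ as a function of $K$ on the range where $q_K<1$. The Gaussian factor $e^{-\frac12\sigma_\Theta^2(K+1)^2}$ together with $1/(K+1)!$ makes $T_K$ eventually strictly decreasing to $0$, and $q_K$ is decreasing in $K$. Therefore the set of $K$ with $T_K\le\varepsilon$ is an upper ray, and its left endpoint is characterized by the equality in the statement, interpreted as the smallest integer with $T_K\le\varepsilon$. This also recovers the existence result of the previous proposition.

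\textbf{Main obstacle.} The delicate part is Step 2. The Gamma-ratio inequality must be chosen so that the resulting ratio matches $q_K$ exactly, and I must verify that the ratio bound is monotone in both $k$ and $r$, so that evaluating it at $k=K$ and $r=R_{\max}$ is uniform. A second point to check is Step 1: the hypergeometric bound requires $k+1\ge m+\frac k2$. For small $K$ relative to $m$, I would fall back on the bound $\Fone{a}{b}{z}\le \frac{\Gamma(b)}{\Gamma(a)}z^{a-b}e^{z}$-type estimates, or note that the statement is intended for this regime.
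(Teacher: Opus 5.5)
Your argument is correct and has the same skeleton as the paper's: a geometric tail with ratio $q_K$, the head term bounded through ${}_1F_1(a;b;z)\le e^{z}$, and everything evaluated at $R_{\max}$. The difference is that you run it on the $\phi$-free majorants $b_k(r)$ rather than on the actual terms.

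The paper bounds $T_{k+1}/T_k$ directly. It splits this ratio into $|w_{k+1}(\phi)|/|w_k(\phi)|\le e^{-\sigma_\Theta^2(k+\frac12)}$ times a ${}_1F_1$ ratio bounded by $e^{r^2/\beta}$ (numerator $\le e^{z}$, denominator $\ge 1$). That is where the $e^{r^2/\beta}$ factor in $q_K$ comes from. Only afterwards does it majorize $T_{K+1}$. Your version is the sounder one, for three reasons:
\begin{itemize}
\item The paper's weight-ratio inequality fails near zeros of $\cos(k\phi)$, where $|\cos((k+1)\phi)|/|\cos(k\phi)|$ is unbounded. Your $b_{k+1}/b_k$ is an exact, $\phi$-independent expression.
\item Your route shows that the $e^{r^2/\beta}$ in $q_K$ is pure slack.
\item You verify the monotonicity in $k$ and $r$ that the paper uses tacitly, both to apply $q_K$ to every $k>K$ and to pass to $r=R_{\max}$.
\end{itemize}

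Two refinements are needed.

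First, in Step 3, ``eventually decreasing'' does not by itself make $\{K: T_K\le\varepsilon\}$ an upper ray. State instead that $T_{K+1}/T_K\le q_{K+1}(1-q_K)/(1-q_{K+1})<1$ whenever $q_K<1$. This gives the upper ray, and it also rules out the spurious ``solutions'' with $q_K\ge 1$ that the paper's equation admits.

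Second, you are right that ${}_1F_1(a;b;z)\le e^{z}$ requires $a\le b$, i.e.\ $k\ge 2m-2$. The paper never states this condition. However, drop the option of absorbing a constant. For $a>b>0$ and $z>0$, every Kummer term strictly exceeds the corresponding term of $e^{z}$, and for large $m$ the excess is orders of magnitude. Any such correction would therefore change the formula.

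So the stated criterion is certified, by your argument or the paper's, only for $K\ge 2m-3$. This holds automatically for $m\le 2$. It does not cover most of the $m\ge 3$ entries in Table~\ref{tab:K_needed_e-3}. This is a limitation of the statement itself, not of your proof.
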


\begin{IEEEproof}
The proof is provided in Appendix \ref{sec:appendix_minK}.
\end{IEEEproof}

\begin{figure*}[t]
\centering
\setlength{\tabcolsep}{4pt}
\renewcommand{\arraystretch}{1.1}

\begin{subfigure}[t]{0.33\textwidth}
  \centering
  \includegraphics[width=\linewidth]{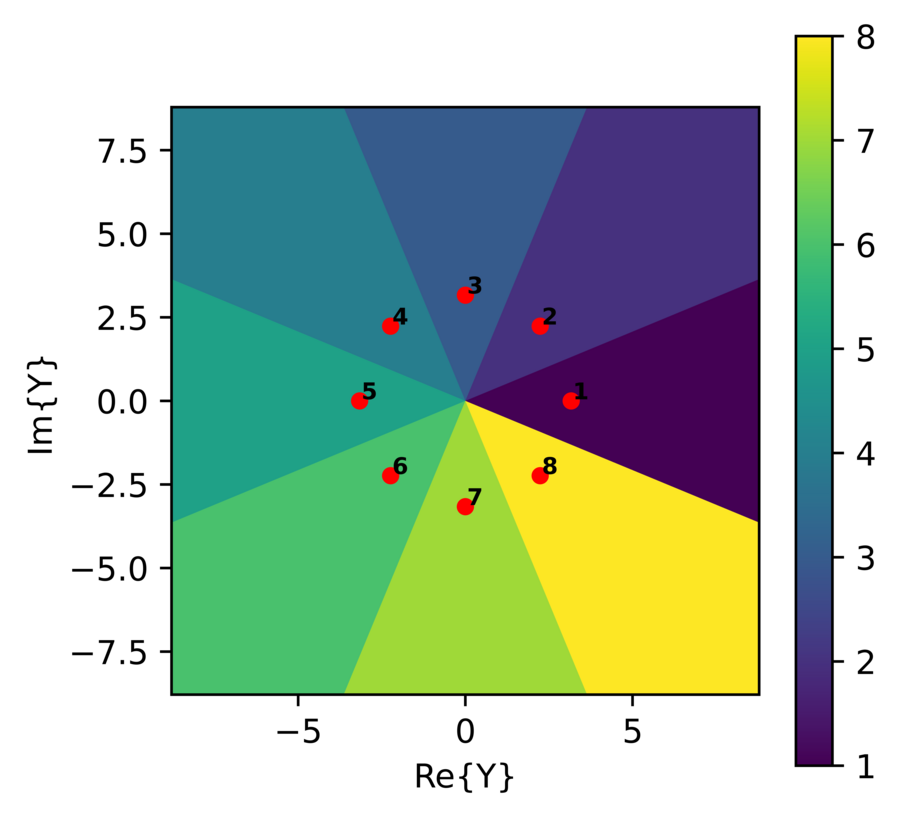}
  \subcaption{CAI, $\gamma_I=2.5$ dB}
\end{subfigure}\hfill
\begin{subfigure}[t]{0.33\textwidth}
  \centering
  \includegraphics[width=\linewidth]{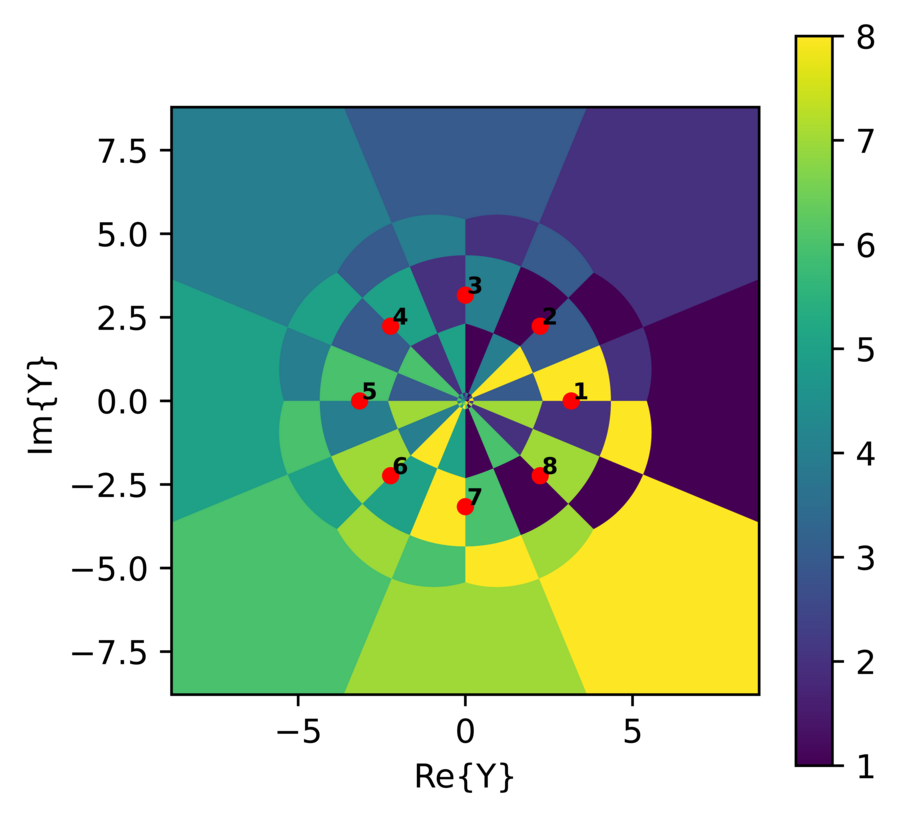}
  \subcaption{CAI, $\gamma_I=10$ dB}
\end{subfigure}\hfill
\begin{subfigure}[t]{0.33\textwidth}
  \centering
  \includegraphics[width=\linewidth]{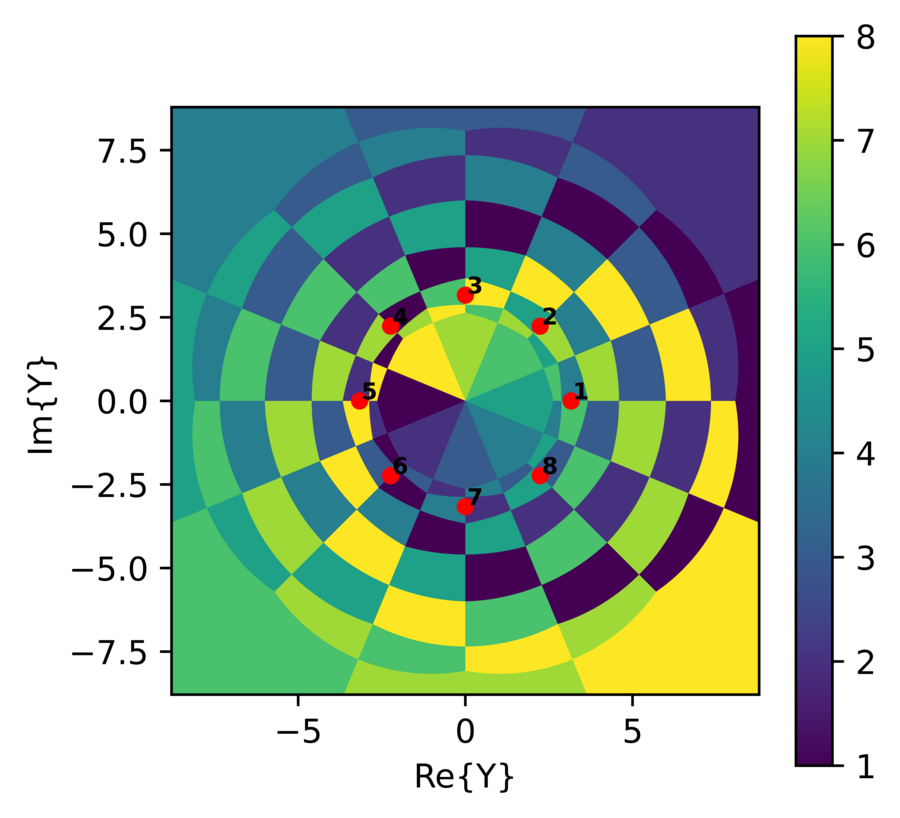}
  \subcaption{CAI, $\gamma_I=15$ dB}
\end{subfigure}

\vspace{4pt}

\begin{subfigure}[t]{0.33\textwidth}
  \centering
  \includegraphics[width=\linewidth]{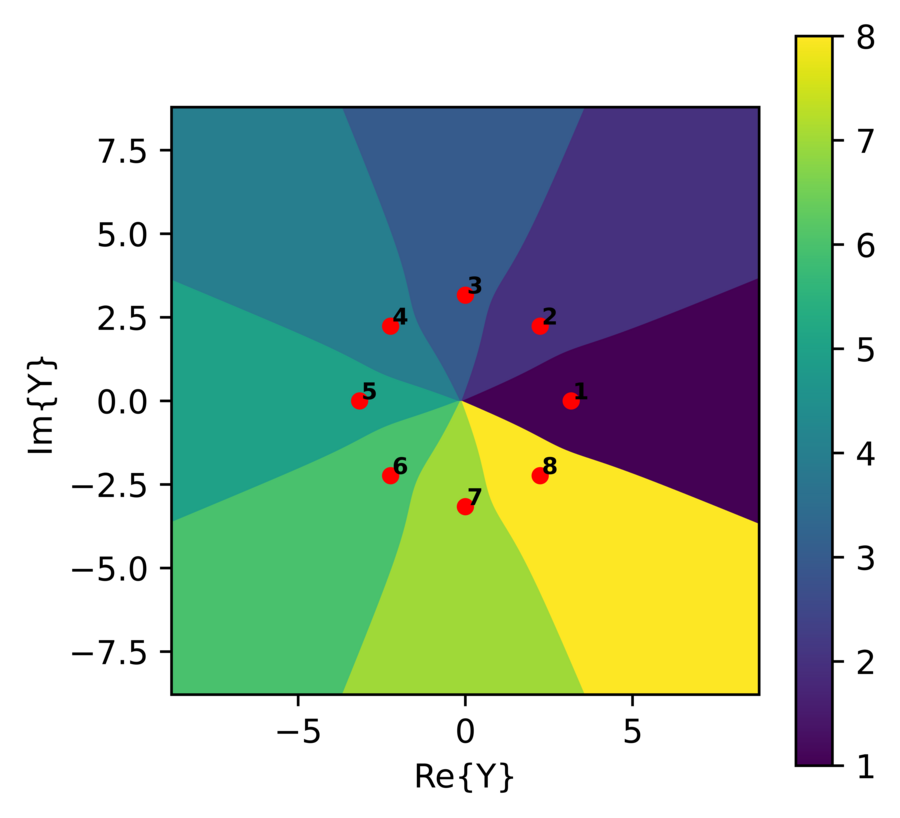}
  \subcaption{ML-G, $\gamma_I=2.5$ dB}
\end{subfigure}\hfill
\begin{subfigure}[t]{0.33\textwidth}
  \centering
  \includegraphics[width=\linewidth]{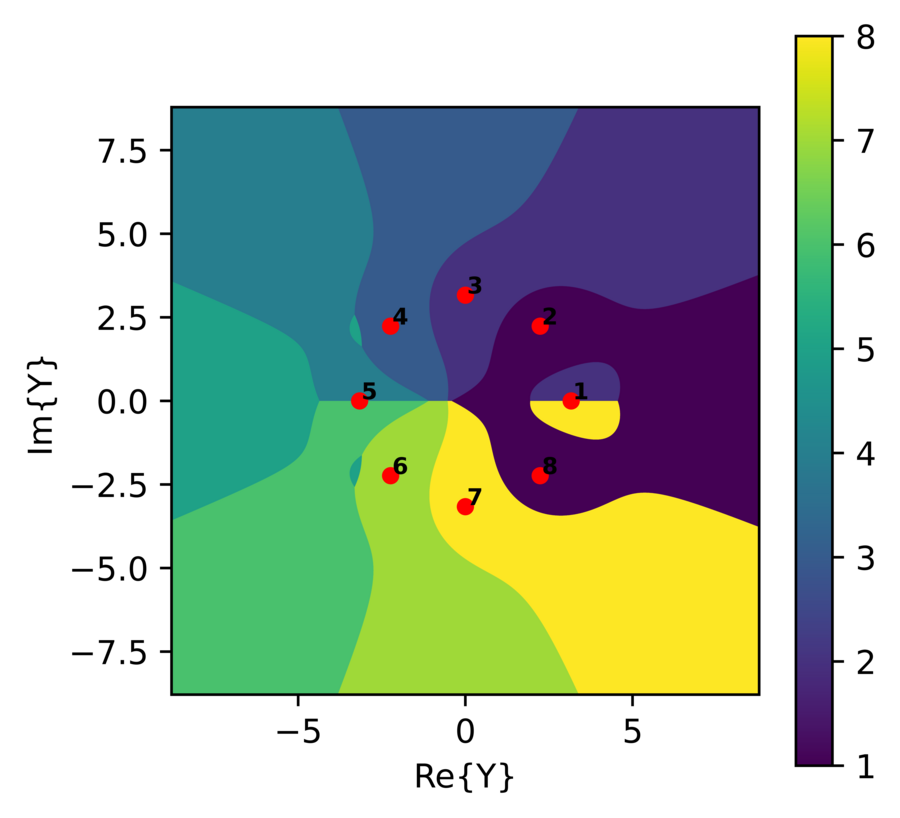}
  \subcaption{ML-G, $\gamma_I=10$ dB}
\end{subfigure}\hfill
\begin{subfigure}[t]{0.33\textwidth}
  \centering
  \includegraphics[width=\linewidth]{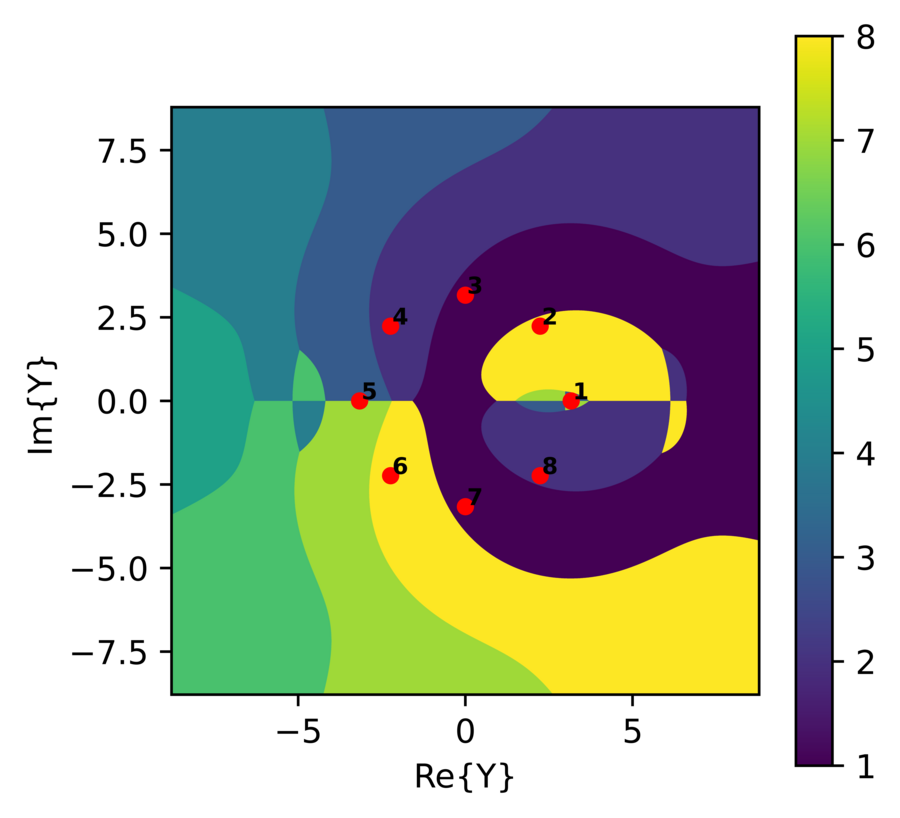}
  \subcaption{ML-G, $\gamma_I=15$ dB}
\end{subfigure}

\caption{Decision regions for 8-PSK at $\gamma_S=10$ dB across various $\gamma_I$.}
\label{fig:DR_3x3_rotated}
\end{figure*}

The above result ensures that the infinite series in~\eqref{eq:Sdef} can be approximated with arbitrary accuracy by retaining a finite number of terms, guaranteeing the numerical tractability of the proposed detector in \eqref{eq:ML_final}. In practice, the truncation index $K$ can be selected to satisfy a prescribed tolerance $\varepsilon$ for a chosen design radius $R_{\max}$. For instance, Table \ref{tab:K_needed_e-3} illustrates the minimum truncation index $K$ required to satisfy a tolerance of $\varepsilon= 10^{-3}$ within a design radius $R_{\max} = 4$. The corresponding values, obtained from~\eqref{eq:min_K}, guarantee that the truncation error remains below $10^{-3}$ uniformly for all $r\in[0,4]$. Table \ref{tab:K_needed_e-3} presents representative combinations of the shape parameter $m$ and $\gamma_I$, showing that only a few terms are sufficient across a broad range of parameters. It should be highlighted that, in many cases, it suffices for one term for the infinite series to converge. This confirms the rapid convergence of the series in~\eqref{eq:Sdef} and highlights the computational efficiency of the proposed detection metric.

\begin{table}
\centering
\caption{Truncation index $K$ for $\varepsilon<10^{-3}$ and $R_{\max}=4$.}
\label{tab:K_needed_e-3}
\setlength{\tabcolsep}{8pt}
\renewcommand{\arraystretch}{1.05}
\begin{tabular}{c|rrrrrr}
\toprule
\multicolumn{7}{c}{Truncation index $K$} \\   
\midrule
$m$ $\backslash$ INR (dB) & $-20$ & $-10$ & $0$ & $10$ & $20$ & $30$ \\
\midrule
2   & 1 & 1 & 2 & 4 & 5 & 5 \\
3   & 1 & 1 & 2 & 4 & 5 & 5 \\
5   & 1 & 1 & 1 & 4 & 5 & 5 \\
10  & 1 & 1 & 1 & 4 & 5 & 5 \\
50  & 1 & 1 & 1 & 6 & 10 & 10 \\
\bottomrule
\end{tabular}
\end{table}

To further interpret the behavior of the detector, it is helpful to consider interference conditions that yield analytical simplifications and offer clearer physical insight. A particularly meaningful case arises when the radar envelope follows a Rayleigh distribution, corresponding to $m=1$, which represents a fully diffuse scattering environment characterized by the absence of a dominant line-of-sight component. Under this condition, the interference statistics are substantially simplified and the proposed detector collapses to the conventional Euclidean distance rule, thereby validating the internal consistency of the formulation and reinforcing its physical interpretability.

\begin{proposition}
\label{prop:Rayleigh}
    When the radar interference follows a Rayleigh envelope, corresponding to $m=1$, 
the proposed detector in~\eqref{eq:ML_final} reduces exactly to the conventional 
Euclidean distance rule given by 
\begin{equation}\label{eq:ML_rayleigh}
    \begin{aligned}
        \hat x^{(m=1)}(y)
        = \argmin_{x\in\mathcal X}\hspace{0.25em} r^{2} .
    \end{aligned}
\end{equation}
\end{proposition}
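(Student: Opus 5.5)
The plan is to evaluate the ML-G metric in \eqref{eq:ML_final} at $m=1$ term by term, show that $\mathcal{S}(r,\phi)$ loses its dependence on $\phi$, and then show that the remaining function of $r$ is strictly increasing. Minimizing a strictly increasing function of $r=|y-\sqrt{S}x|$ over $x\in\mathcal X$ is the same as minimizing $r^{2}$.

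\emph{Step 1 (phase statistics at $m=1$).} With $m=1$ we have $|\sin 2\theta|^{m-1}=1$ and $C(1)=\Gamma(\tfrac12)/\Gamma(1)=\sqrt{\pi}$. Hence \eqref{eq:nakPhase} becomes $f_\Theta(\theta)=1/(2\pi)$, the exactly uniform phase. As a consistency check, Lemma~\ref{lemma:MoM} gives
\[
\sigma_\Theta^2(1)=\frac{\pi^2}{4}+\frac{\pi^3/3}{4\pi}=\frac{\pi^2}{3},
\]
which is the variance of the uniform law on $[0,2\pi]$.

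\emph{Step 2 (the weights).} The weights $w_k(\phi)$ arise in Appendix~\ref{sec:appendix_ML_detector} from the Jacobi--Anger expansion $e^{2ar\cos(\phi-\theta)}=\sum_k I_k(2ar)e^{jk(\phi-\theta)}$ averaged over $\Theta$. Thus $w_k(\phi)$ equals $2\,\mathrm{Re}\{\E[e^{-jk\Theta}]e^{jk\phi}\}$. The Gaussian surrogate with mean $\pi$ replaces $\E[e^{-jk\Theta}]$ by $(-1)^k e^{-\sigma_\Theta^2k^2/2}$.

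For the uniform phase, $\E[e^{-jk\Theta}]=0$ for every $k\ge1$, so all $w_k\equiv0$ and $\mathcal{S}(r,\phi)=I_{0,1}(r)$. This is the main obstacle. With the Gaussian surrogate the weights are only small, for example $|w_1|\le 2e^{-\pi^2/6}$, and not zero. I would therefore state explicitly that at $m=1$ the moment-matched phase law is replaced by the exact circular law it approximates; equivalently, the characteristic function is wrapped at integer $k$. This is the natural reading of ``reduces exactly'', because the approximation is unnecessary when the true phase is uniform.

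\emph{Step 3 (closed form of the zeroth term).} Using $\Fone{1}{1}{z}=e^{z}$, we get
\[
I_{0,1}(r)=\frac{\Gamma(1)}{2\beta}\,e^{r^2/\beta},\qquad \beta=1+\frac{1}{\Omega}.
\]
The metric therefore becomes
\[
r^2-\frac{r^2}{\beta}+\ln(2\beta)=\frac{r^2}{1+\Omega}+\ln(2\beta),
\]
since $1-1/\beta=1/(1+\Omega)$. The constant $\ln(2\beta)$ is independent of $x$, and the factor $1/(1+\Omega)$ is positive. Hence the argmin is unchanged from that of $r^2$, which gives \eqref{eq:ML_rayleigh}.

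\emph{Step 4 (sanity check).} I would note that this agrees with the direct physical argument. A Rayleigh envelope combined with a uniform phase makes $I\sim\mathcal{CN}(0,\Omega)$, so $Y\mid x\sim\mathcal{CN}(\sqrt{S}x,1+\Omega)$. Its likelihood is proportional to $e^{-r^2/(1+\Omega)}$, which is exactly what Step 3 produced.
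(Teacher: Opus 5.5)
Your proposal is correct and is essentially the paper's argument. At $m=1$ the exact uniform phase removes every harmonic: the paper writes the phase-averaged likelihood directly as $\tfrac{1}{\pi}e^{-r^2-a^2}I_0(2ar)$, which is your $w_k\equiv 0$. The Rayleigh average, done in the paper with the table integral Gradshteyn--Ryzhik 6.631/7 and by you via ${}_1F_1(1;1;z)=e^{z}$, then gives a likelihood proportional to $e^{-r^2/(1+\Omega)}$.

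Your caveat is also valid: with the moment-matched value $\sigma_\Theta^2(1)=\pi^2/3$, the surrogate weights $w_k$ do not vanish. The paper's proof silently substitutes the exact uniform law for the Gaussian approximation, which is the same step you state explicitly.
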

\begin{IEEEproof}
For $m=1$, the phase law in~\eqref{eq:nakPhase} becomes uniform over 
$[0,2\pi)$, and the interference amplitude reduces to the Rayleigh distribution
$f_A(a) = \frac{2}{\Omega} a e^{-a^{2}/\Omega}$, 
where $\Omega = \mathbb{E}[A^{2}]$ denotes the average interference power. 
Substituting these distributions into the conditional likelihood of $Y$ given $X$ and $A$, 
and then averaging over the Rayleigh envelope amplitude, we obtain
\begin{equation}
    \begin{aligned}
        f_{Y|X}(y|x)
=\int_0^\infty f_{Y|X,A}(y|x,a)f_A(a)da.
    \end{aligned}
\end{equation}
Since for uniform phase $f^{(m=1)}_{Y|X,A}(y|x,a)=\tfrac{1}{\pi}e^{-r^2-a^2}I_0(2ar)$, where where $I_{0}(\cdot)$ denotes the modified Bessel function of the first kind and order zero \cite{gradshteyn2007},
substituting $f_A(a)=\tfrac{2}{\Omega}ae^{-a^2/\Omega}$, we obtain
\begin{equation}
    \begin{aligned}
        f_{Y|X}(y|x)
\propto e^{-r^2}\int_0^\infty ae^{-\beta a^2}I_0(2ar)da.
    \end{aligned}
\end{equation}
Using \cite[6.631/7]{gradshteyn2007},
we derive
\begin{equation}
    \begin{aligned}
        f_{Y|X}(y|x)\propto 
\exp\left(-\tfrac{r^2}{1+\Omega}\right),
    \end{aligned}
\end{equation}
which concludes the proof.
\end{IEEEproof}

\subsection{Decision Regions}\label{sec:regions}
In this section, we characterize the decision regions associated with the proposed ML-G detector. Our analysis reveals how non-uniform interference modifies the decision boundaries between constellation symbols and alter their relative separability.

For a generic detector with metric $\Lambda(y,x)$, the decision region 
associated with symbol $x_i$ is defined as
\begin{equation}\label{eq:region-def}
  \mathcal{D}_i
  = \bigl\{ y\in\mathbb{C} : 
      \Lambda(y,x_i) \le \Lambda(y,x_j),\ \forall j\neq i \bigr\}.
\end{equation}
Equivalently, the pairwise boundary between symbols $x_i$ and $x_j$ is 
determined by the zero-level set of the metric difference
\begin{equation}\label{eq:pairwise-def}
  \Delta_{i,j}(y)
  = \Lambda(y,x_i)-\Lambda(y,x_j) = 0.
\end{equation}
In this direction, using \eqref{eq:ML_final}, we can express the pairwise difference as
\begin{equation}\label{eq:pairwise-G}
\begin{aligned}
  \Delta_{i,j}^{(G)}(y)
  =& \bigl|y-\sqrt{S}x_i\bigr|^{2}
   - \bigl|y-\sqrt{S}x_j\bigr|^{2} \\
  &- \ln\left(\frac{\mathcal{S}(r_j,\phi_j)}{\mathcal{S}(r_i,\phi_i)}\right),
\end{aligned}
\end{equation}
where $(r_\ell,\phi_\ell)$ correspond to the polar coordinates of 
$y-\sqrt{S}x_\ell$, for $\ell\in\{i,j\}$. As can be observed, the first two terms in \eqref{eq:pairwise-G} yield the familiar Voronoi boundaries of conventional Euclidean distance detector, while the additional log-ratio term introduces a symbol-dependent radial correction affected by the interference-induced statistic $\mathcal{S}(r,\phi)$. As a result, the regions $\mathcal{D}_i^{(G)}$ deviate from standard Voronoi 
cells, reflecting asymmetric warping effects that depend jointly on the interference amplitude distribution and the phase dispersion captured by $\mathcal{S}$.

Fig.~\ref{fig:DR_3x3_rotated} provides a visual illustration of the decision regions produced by the proposed ML-G detector. The figure shows the boundaries for an 8-phase shift keying (PSK) constellation at $\gamma_s = 10$ dB under three interference levels $\gamma_I \in \{2.5, 10, 15\}$ dB, with the constellation points indicated by red markers, and includes a comparison with the constant amplitude interference (CAI) detection metric described in \cite{Nartasilpa2018}. As can be seen, when the interference is weak, e.g., $\gamma_I = 2.5$ dB, INR is much smaller than SNR and the resulting regions are nearly identical to those of the conventional Euclidean distance detector, since the decision geometry is dominated by the noise term. However, as the interference increases and becomes comparable to or stronger than the desired signal, as in the cases $\gamma_I = 10$ dB and $\gamma_I = 15$ dB, the regions produced by the Gaussian-phase metric deviate substantially from the CAI boundaries and display clear warping and asymmetry. This behavior reflects the influence of the trigonometric cosine factors in the harmonic weights $w_k$ of \eqref{eq:Sdef}, which modulate the function $\mathcal{S}(r,\phi)$ and thus reshape the decision boundaries.

\section{Constellation Design}

\begin{figure*}[!ht]
\centering

\begin{subfigure}[t]{0.33\textwidth}
  \centering
  \includegraphics[width=\linewidth]{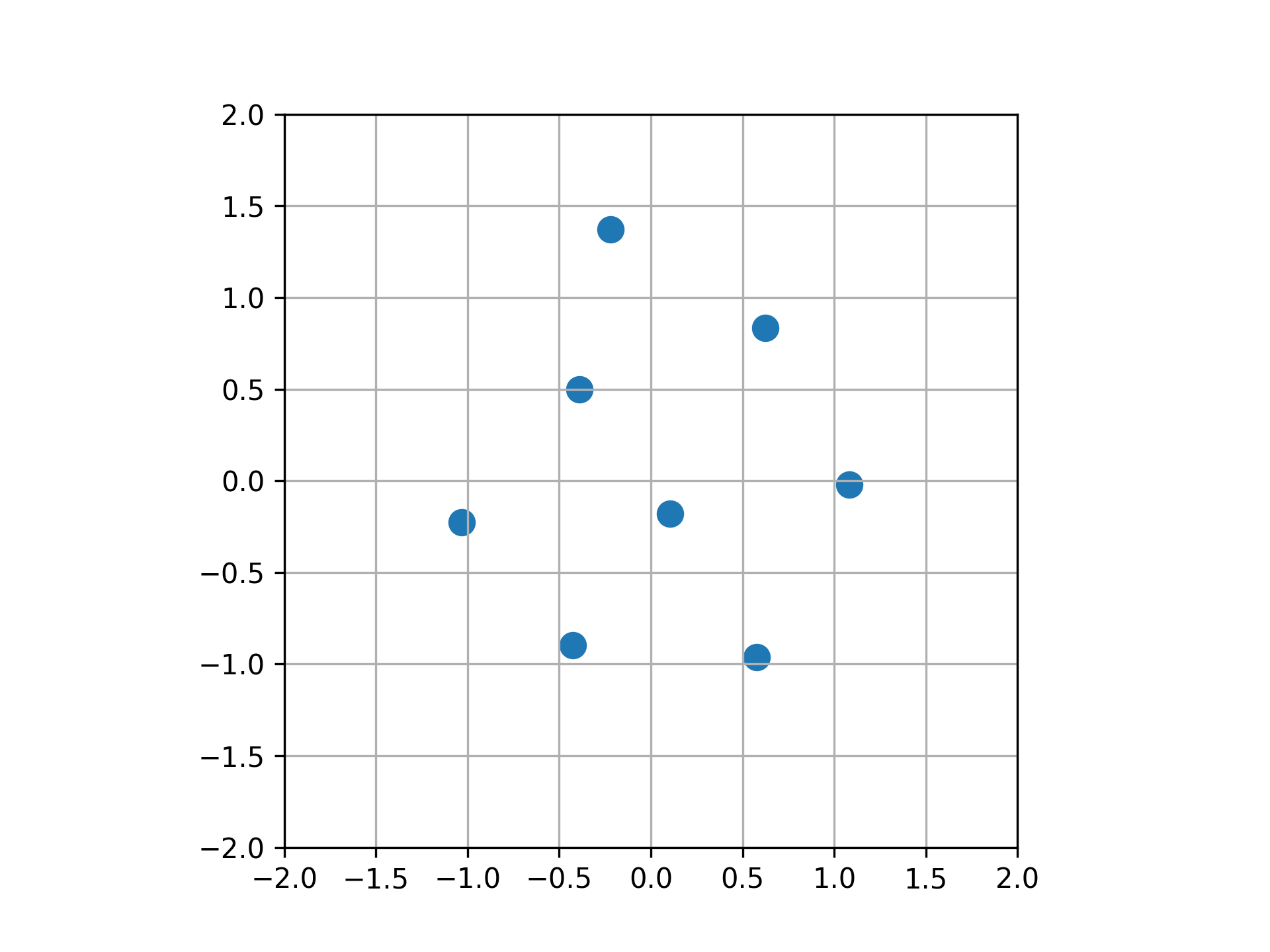}
  \subcaption{$M=8$, $\gamma_I=5$ dB}
\end{subfigure}\hfill
\begin{subfigure}[t]{0.33\textwidth}
  \centering
  \includegraphics[width=\linewidth]{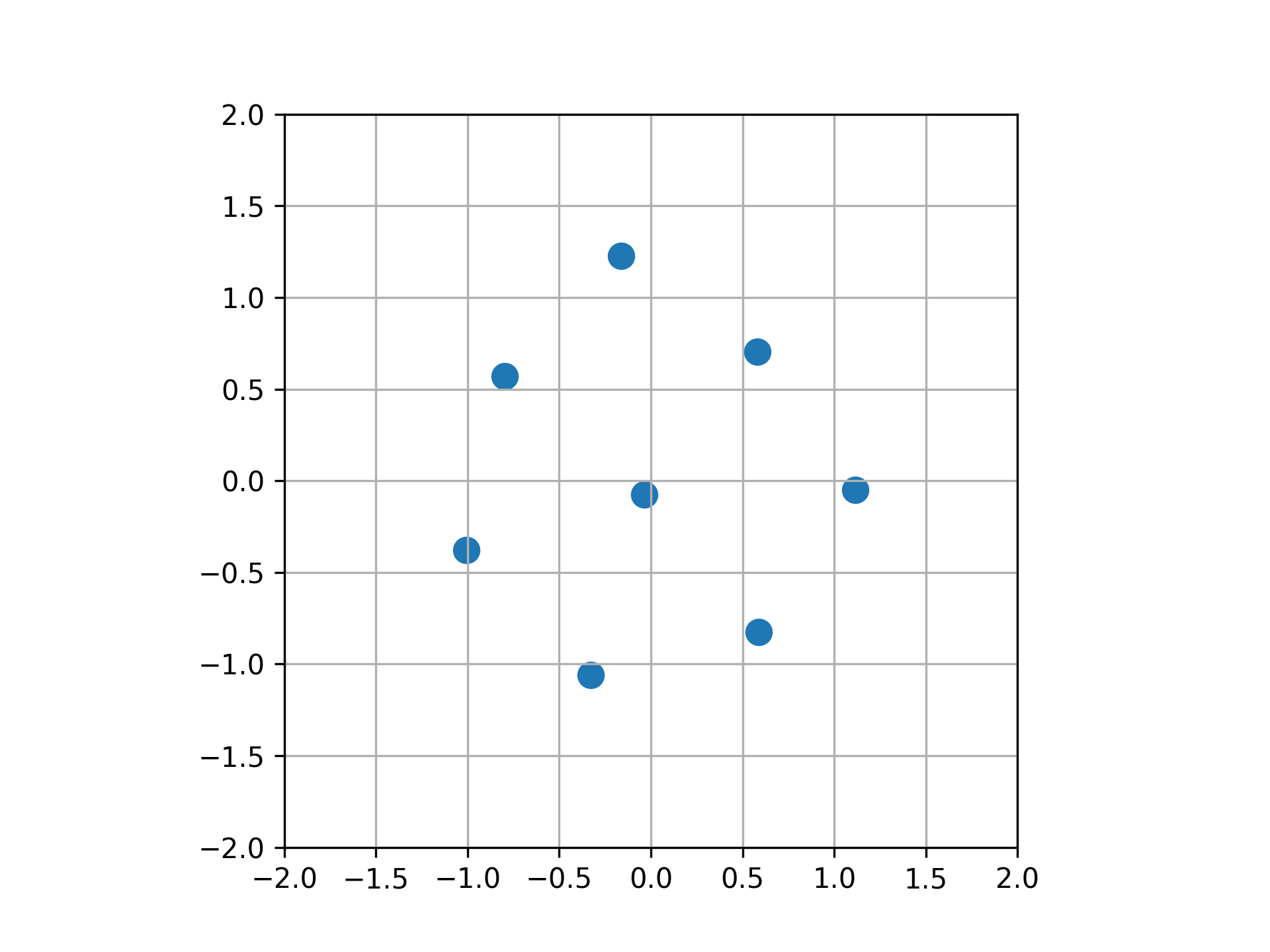}
  \subcaption{$M=8$, $\gamma_I=12$ dB}
\end{subfigure}\hfill
\begin{subfigure}[t]{0.33\textwidth}
  \centering
  \includegraphics[width=\linewidth]{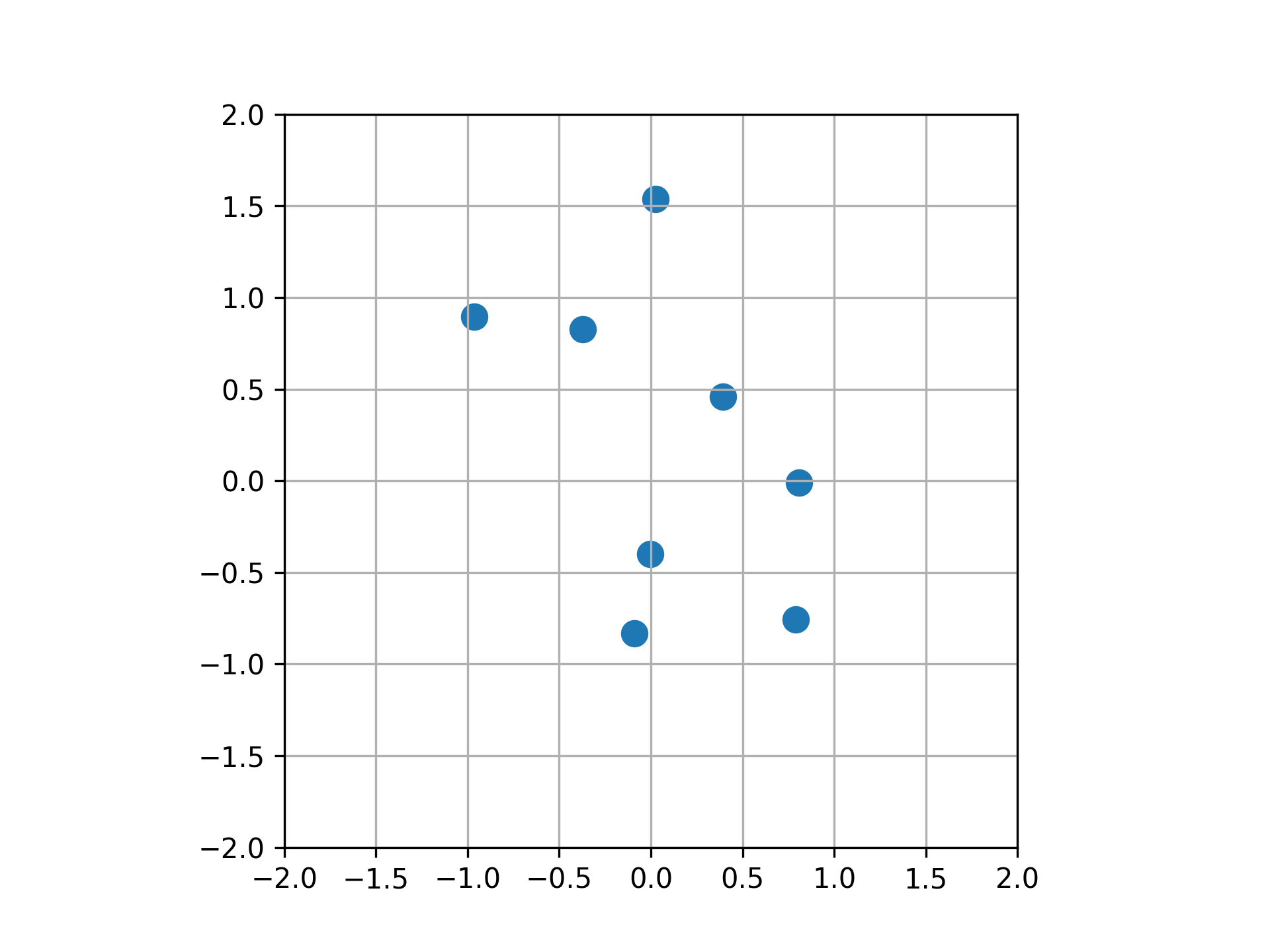}
  \subcaption{$M=8$, $\gamma_I=20$ dB}
\end{subfigure}

\vspace{4pt}

\begin{subfigure}[t]{0.33\textwidth}
  \centering
  \includegraphics[width=\linewidth]{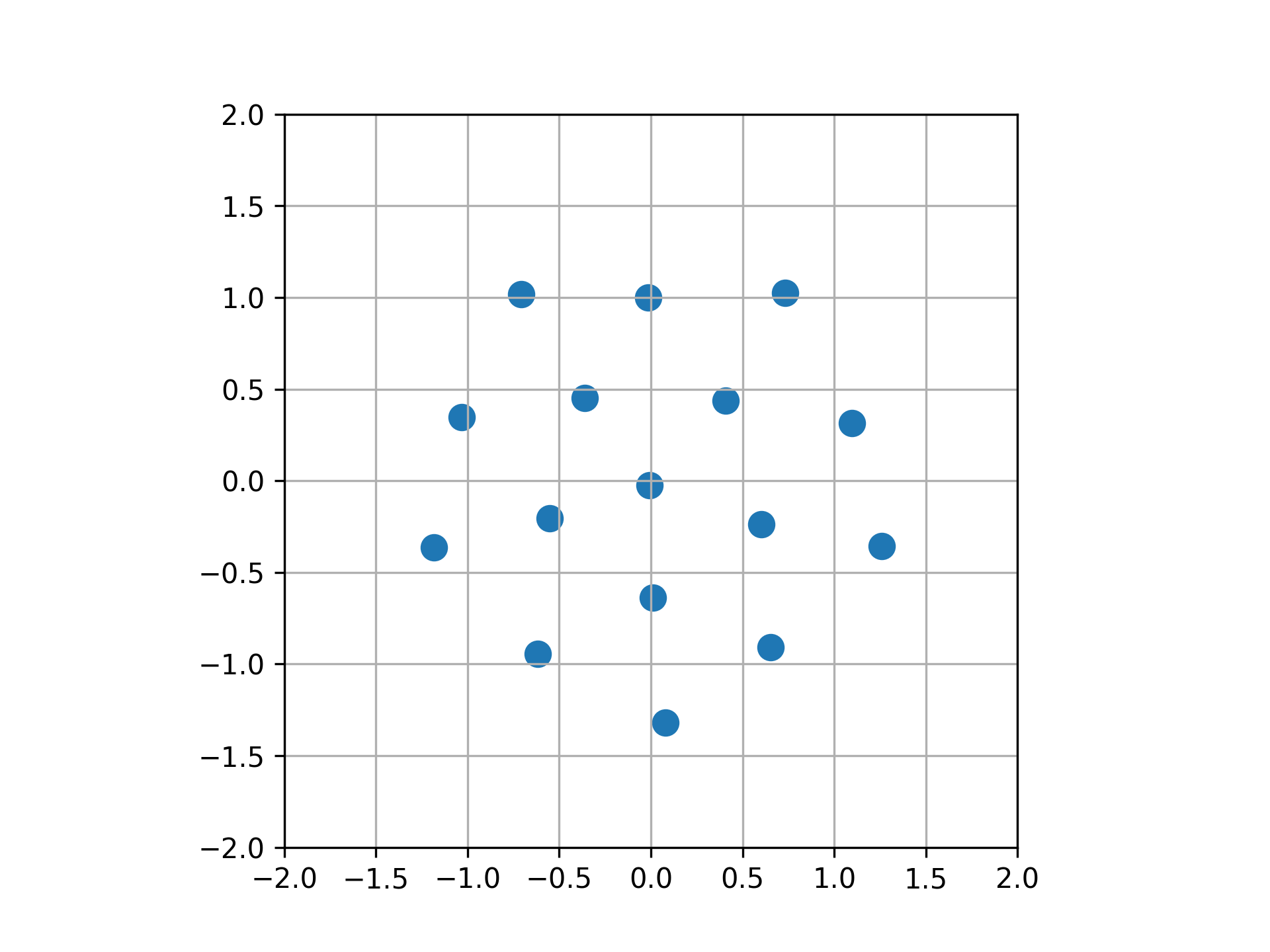}
  \subcaption{$M=16$, $\gamma_I=5$ dB}
\end{subfigure}\hfill
\begin{subfigure}[t]{0.33\textwidth}
  \centering
  \includegraphics[width=\linewidth]{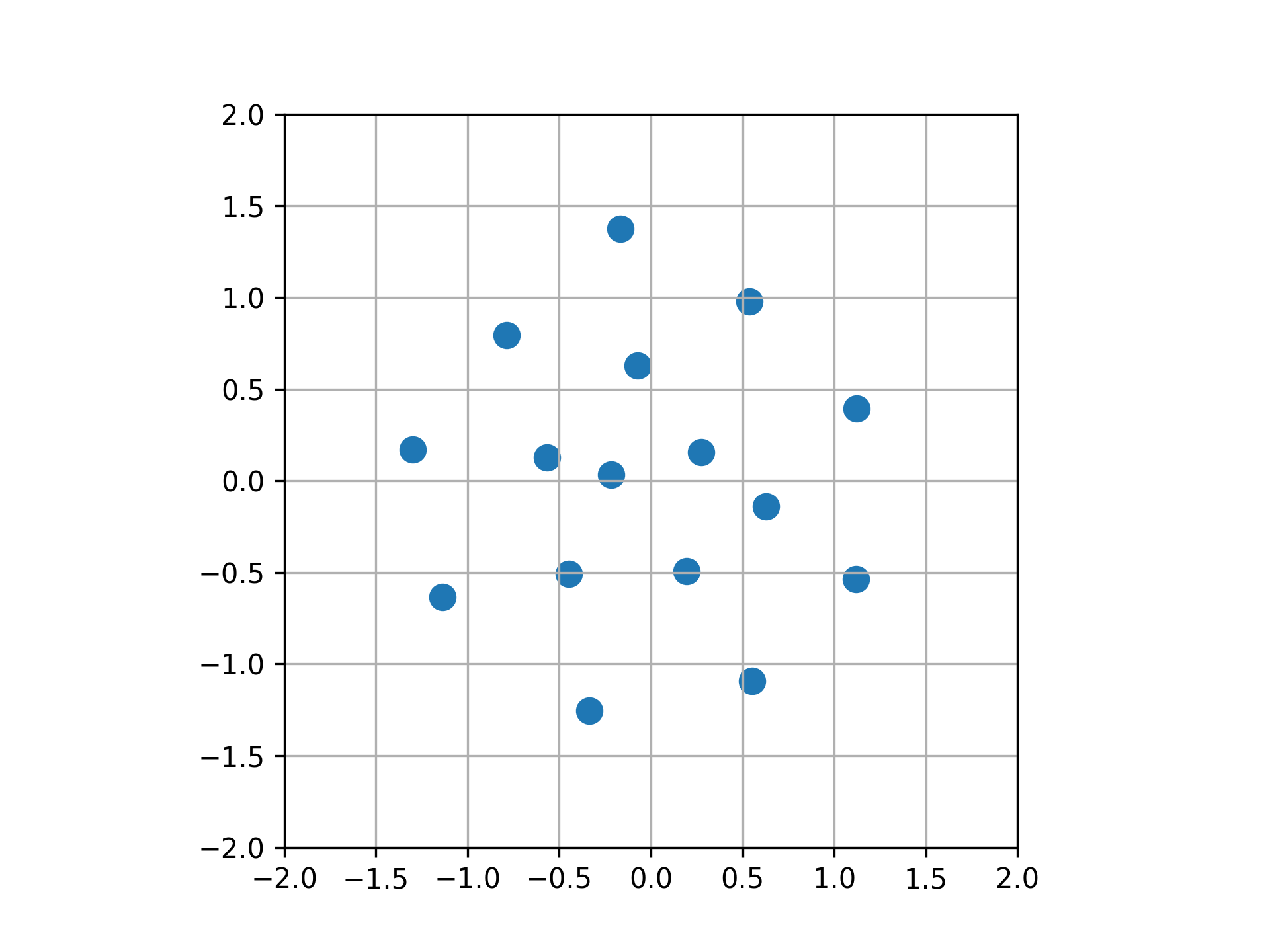}
  \subcaption{$M=16$, $\gamma_I=12$ dB}
\end{subfigure}\hfill
\begin{subfigure}[t]{0.33\textwidth}
  \centering
  \includegraphics[width=\linewidth]{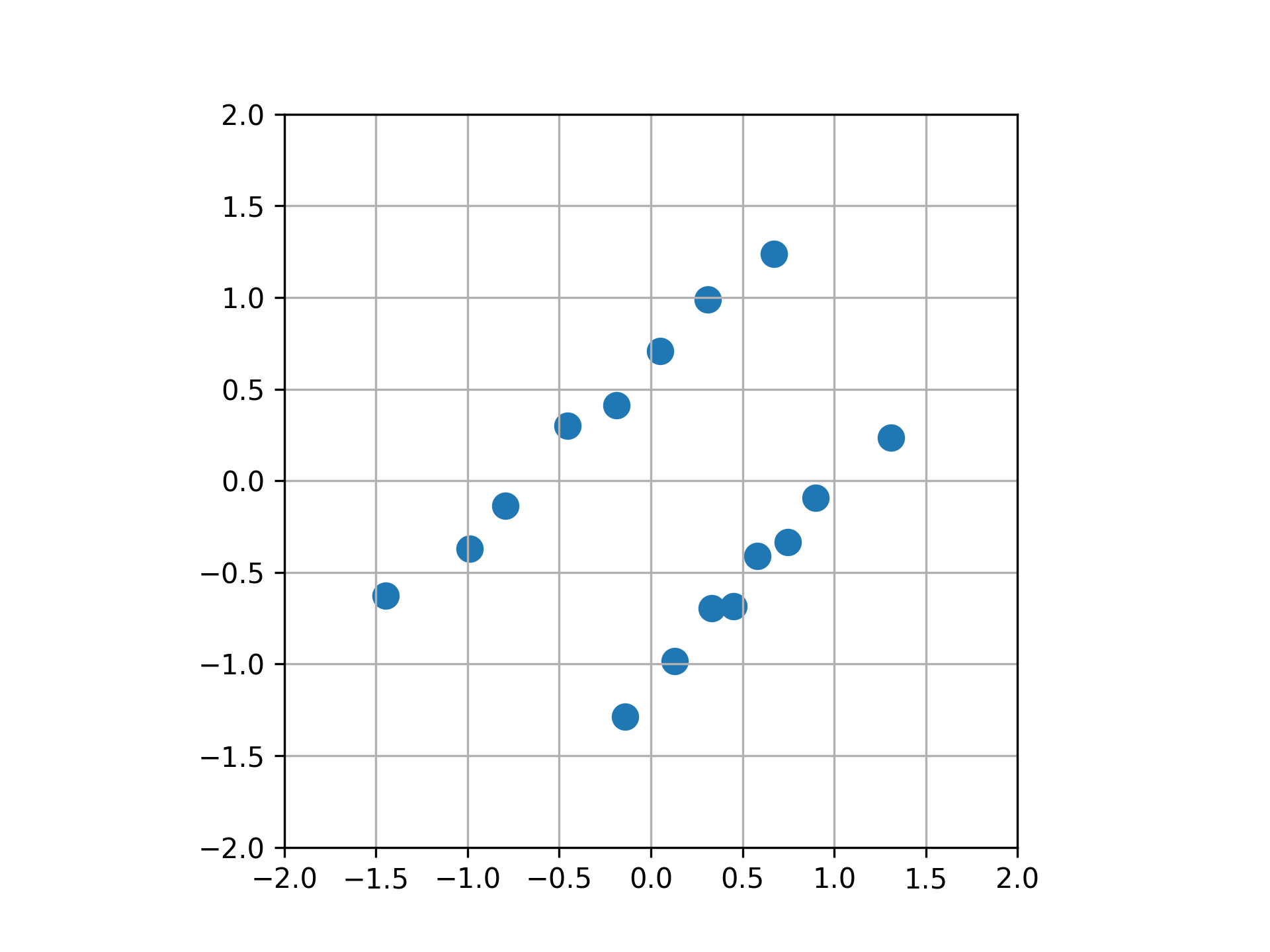}
  \subcaption{$M=16$, $\gamma_I=20$ dB}
\end{subfigure}

\caption{Optimized constellations (minimum-SEP, ML-G metric) at fixed $\gamma_S=20$ dB, and $\gamma_I \in \{5,12,20\}$ dB, $m=2$.}
\label{fig:MinSER_Constellations_Rotated}
\end{figure*}

The proposed ML-G detector induces a non-uniform decision rule whose geometry depends on both the strength and the directionality of the interfering signal. In contrast to the conventional AWGN setting, where Euclidean distance fully characterizes symbol separability, structured interference alters both the radial and angular components of the decision boundaries through the statistic $\mathcal{S}(r,\phi)$. As a result, symbol separability under the ML-G metric is governed by the joint interaction between interference statistics and constellation geometry, rather than by distance alone. This observation naturally motivates the examination of signal constellations under the proposed detection framework and leads to the formulation of constellation designs that minimize the average SEP in interference-limited conditions. Accordingly, for fixed parameters $(S,I,m)$ and constellation order $M$, the design objective is to minimize the SEP induced by the ML-G detection rule, which leads directly to the optimization problem in \eqref{eq:optSER}.

The structure of this optimization problem is dictated by the geometry of the ML-G decision regions, which are determined by the statistic $\mathcal{S}(r,\phi)$ and its harmonic series representation, and can expressed as
\begin{equation} \label{eq:optSER}
\begin{aligned}
\min_{\mathcal X} \quad & P_e(\mathcal X; S, I, m), \\
\text{s.t.} \quad & C_1: \mathcal X = {x_1,\dots,x_M},\quad x_i \in \mathbb C,\\
& C_2: \frac{1}{M}\sum_{i=1}^{M}|x_i|^{2}\le 1.
\end{aligned}
\end{equation}
In this optimization framework, the search over $\mathcal{X}$ is performed using a combination of global exploration and local refinement. Specifically, a differential-evolution algorithm~\cite{Barrueco2021} is employed to explore the feasible domain, followed by a local refinement step that enforces the average energy constraint and maintains a minimum inter-point distance to avoid degenerate solutions. Finally, since the resulting decision regions are shaped by interference-induced asymmetries and directional effects, the constellation structure is not restricted by symmetry constraints, allowing the optimization to adapt the geometry of $\mathcal{X}$ to the distortions imposed by the ML-G decision rule.


The geometric trends induced by the optimization process are reflected in Fig.~\ref{fig:MinSER_Constellations_Rotated} for $M\in{4,8,16}$ at $\gamma_S=20$ dB. At low interference levels, for example $\gamma_I=5 $ dB, the optimized constellations closely resemble hexagonal-QAM structures~\cite{thrassos_hqam}, which is consistent with the near-Euclidean behavior of the ML-G metric when angular interference effects are weak. As the interference power approaches that of the desired signal, around $\gamma_I\approx12$ dB, the optimized geometries progressively lose symmetry, reflecting the increasing influence of angular interference components and the departure from isotropic noise conditions. In the strong-interference regime, $\gamma_I\gg \gamma_S$, the optimized geometries depart substantially from conventional modulation formats, with highly irregular constellations observed for $M=4$ and $M=8$. In contrast, for $M=16$, a more structured configuration emerges, resembling two parallel 8-PAM layers oriented along nearly orthogonal axes. This geometry concentrates symbol energy along directions that are less sensitive to the dominant harmonics of $\mathcal{S}(r,\phi)$ while preserving sufficient inter-symbol separation. Overall, these results indicate a continuous transition toward interference-adaptive geometries as the INR increases, reflecting the coupled impact of envelope fading, angular interference statistics, and the non-uniform decision regions induced by the ML-G detector.

\section{Numerical Results}
\begin{figure*}[!t]
\centering
\setlength{\tabcolsep}{6pt}\renewcommand{\arraystretch}{1.2}
\begin{tabular}{cc}
\subcaptionbox{$64$-QAM for $\gamma_S=30$ dB and $m=2$.}[0.45\textwidth]{%
     \begin{tikzpicture}
    \begin{axis}[
      width=\linewidth,
      xlabel={$\gamma$ (dB)},
      ylabel={$\log_{10}(\mathrm{SEP})$},
      xmin=-20,xmax=30,
      xtick={-20,-10,0,10,20,30},
      ymin=-0.6,ymax=0,
      grid=major,
      legend pos=south west,
      legend cell align=left,
      legend style={font=\footnotesize,draw=black,fill=white},
      unbounded coords=jump,
    ]
    
    \addplot+[color=blue,line width=1.1pt,mark=*,mark size=2pt,mark repeat=6,mark phase=0,mark options={fill=blue,solid}]
    table[col sep=comma, x index=0,
          y expr={ln(max(\thisrow{SER},1e-12))/ln(10)}]
    {csv/SER_M64_SNR30dB_mu2_QAM_MLG_CAI_EuclideanDistance_smoothed_ml_g.txt};
    \addlegendentry{ML-G}
    
    \addplot+[color=orange,line width=1.1pt,dashdotted,mark=triangle*,mark size=2.5pt,mark repeat=6,mark phase=1, mark options={fill=orange,solid}]
    table[col sep=comma, x index=0,
          y expr={ln(max(\thisrow{SER},1e-12))/ln(10)}]
    {csv/SER_M64_SNR30dB_mu2_QAM_MLG_CAI_EuclideanDistance_smoothed_cai.txt};
    \addlegendentry{CAI}
    
    \addplot+[color=green!60!black,line width=1.1pt,dashed,mark=square*,mark size=2.2pt,mark repeat=6,mark phase=2, mark options={fill=green!60!black,solid}]
    table[col sep=comma, x index=0,
          y expr={ln(max(\thisrow{SER},1e-12))/ln(10)}]
    {csv/SER_M64_SNR30dB_mu2_QAM_MLG_CAI_EuclideanDistance_smoothed_eucl.txt};
    \addlegendentry{Euclidean Distance}
    
    \end{axis}
    \end{tikzpicture}
} &
    \subcaptionbox{$64$-PSK for $\gamma_S=30$ dB, $m=2$.}[0.45\textwidth]{%
      \begin{tikzpicture}
    \begin{axis}[
      width=\linewidth,
      xlabel={$\gamma$ (dB)},
      ylabel={$\log_{10}(\mathrm{SEP})$},
      xmin=-20,xmax=30,
      xtick={-20,-10,0,10,20,30},
      ymin=-0.6,ymax=0,
      grid=major,
      legend pos=south west,
      legend cell align=left,
      legend style={font=\footnotesize,draw=black,fill=white},
      unbounded coords=jump,
    ]
    
    \addplot+[color=blue,line width=1.1pt,mark=*,mark size=2pt,mark repeat=6,mark phase=0,mark options={fill=blue,solid}]
    table[col sep=comma, x index=0,
          y expr={ln(max(\thisrow{SER},1e-12))/ln(10)}]
    {csv/SER_M64_SNR30dB_mu2_PSK_MLG_CAI_EuclideanDistance_smoothed_ml_g.txt};
    \addlegendentry{ML-G}
    
    \addplot+[color=orange,line width=1.1pt,dashdotted,mark=triangle*,mark size=2.5pt,mark repeat=6,mark phase=1, mark options={fill=orange,solid}]
    table[col sep=comma, x index=0,
          y expr={ln(max(\thisrow{SER},1e-12))/ln(10)}]
    {csv/SER_M64_SNR30dB_mu2_PSK_MLG_CAI_EuclideanDistance_smoothed_cai.txt};
    \addlegendentry{CAI}
    
    \addplot+[color=green!60!black,line width=1.1pt,dashed,mark=square*,mark size=2.2pt,mark repeat=6,mark phase=2, mark options={fill=green!60!black,solid}]
    table[col sep=comma, x index=0,
          y expr={ln(max(\thisrow{SER},1e-12))/ln(10)}]
    {csv/SER_M64_SNR30dB_mu2_PSK_MLG_CAI_EuclideanDistance_smoothed_eucl.txt};
    \addlegendentry{Euclidean Distance}
    
    \end{axis}
    \end{tikzpicture}
} \\

\subcaptionbox{$64$-QAM for $\gamma_S=30$ dB, $m=5$.}[0.45\textwidth]{%
     \begin{tikzpicture}
    \begin{axis}[
      width=\linewidth,
      xlabel={$\gamma$ (dB)},
      ylabel={$\log_{10}(\mathrm{SEP})$},
      xmin=-20,xmax=30,
      xtick={-20,-10,0,10,20,30},
      ymin=-0.6,ymax=0,
      grid=major,
      legend pos=south west,
      legend cell align=left,
      legend style={font=\footnotesize,draw=black,fill=white},
      unbounded coords=jump,
    ]
    
    \addplot+[color=blue,line width=1.1pt,mark=*,mark size=2pt,mark repeat=6,mark phase=0,mark options={fill=blue,solid}]
    table[col sep=comma, x index=0,
          y expr={ln(max(\thisrow{SER},1e-12))/ln(10)}]
    {csv/SER_M64_SNR30dB_mu5_QAM_MLG_CAI_EuclideanDistance_smoothed_ml_g.txt};
    \addlegendentry{ML-G}
    
    \addplot+[color=orange,line width=1.1pt,dashdotted,mark=triangle*,mark size=2.5pt,mark repeat=6,mark phase=1, mark options={fill=orange,solid}]
    table[col sep=comma, x index=0,
          y expr={ln(max(\thisrow{SER},1e-12))/ln(10)}]
    {csv/SER_M64_SNR30dB_mu5_QAM_MLG_CAI_EuclideanDistance_smoothed_cai.txt};
    \addlegendentry{CAI}
    
    \addplot+[color=green!60!black,line width=1.1pt,dashed,mark=square*,mark size=2.2pt,mark repeat=6,mark phase=2, mark options={fill=green!60!black,solid}]
    table[col sep=comma, x index=0,
          y expr={ln(max(\thisrow{SER},1e-12))/ln(10)}]
    {csv/SER_M64_SNR30dB_mu5_QAM_MLG_CAI_EuclideanDistance_smoothed_eucl.txt};
    \addlegendentry{Euclidean Distance}
    
    \end{axis}
    \end{tikzpicture}
} &
\subcaptionbox{$64$-PSK for $\gamma_S=30$ dB, $m=5$.}[0.45\textwidth]{%
      \begin{tikzpicture}
    \begin{axis}[
      width=\linewidth,
      xlabel={$\gamma$ (dB)},
      ylabel={$\log_{10}(\mathrm{SEP})$},
      xmin=-20,xmax=30,
      xtick={-20,-10,0,10,20,30},
      ymin=-0.6,ymax=0,
      grid=major,
      legend pos=south west,
      legend cell align=left,
      legend style={font=\footnotesize,draw=black,fill=white},
      unbounded coords=jump,
    ]
    
    \addplot+[color=blue,line width=1.1pt,mark=*,mark size=2pt,mark repeat=6,mark phase=0,mark options={fill=blue,solid}]
    table[col sep=comma, x index=0,
          y expr={ln(max(\thisrow{SER},1e-12))/ln(10)}]
    {csv/SER_M64_SNR30dB_mu5_PSK_MLG_CAI_EuclideanDistance_smoothed_ml_g.txt};
    \addlegendentry{ML-G}
    
    \addplot+[color=orange,line width=1.1pt,dashdotted,mark=triangle*,mark size=2.5pt,mark repeat=6,mark phase=1, mark options={fill=orange,solid}]
    table[col sep=comma, x index=0,
          y expr={ln(max(\thisrow{SER},1e-12))/ln(10)}]
    {csv/SER_M64_SNR30dB_mu5_PSK_MLG_CAI_EuclideanDistance_smoothed_cai.txt};
    \addlegendentry{CAI}
    
    \addplot+[color=green!60!black,line width=1.1pt,dashed,mark=square*,mark size=2.2pt,mark repeat=6,mark phase=2, mark options={fill=green!60!black,solid}]
    table[col sep=comma, x index=0,
          y expr={ln(max(\thisrow{SER},1e-12))/ln(10)}]
    {csv/SER_M64_SNR30dB_mu5_PSK_MLG_CAI_EuclideanDistance_smoothed_eucl.txt};
    \addlegendentry{Euclidean Distance}
    
    \end{axis}
    \end{tikzpicture}
}
\end{tabular}

\caption{SEP versus $\gamma$ for $64$-QAM and $64$-PSK under various interference conditions.}
\label{fig:DR_SER}
\end{figure*}
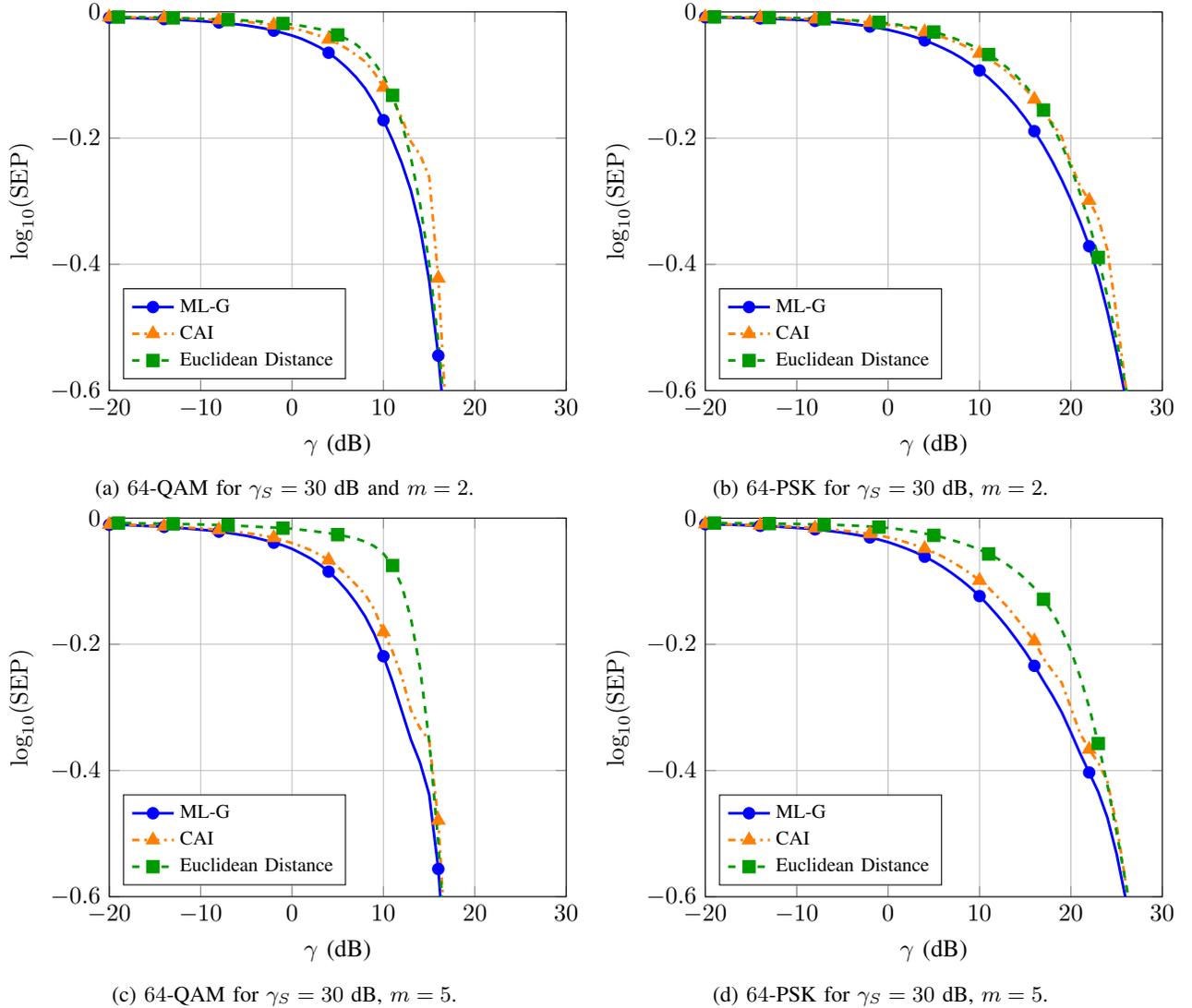

This section presents numerical evaluations of the proposed detector.  
Unless otherwise stated, performance is quantified through Monte Carlo simulation and expressed in terms of SEP.  
We examine standard constellations, including PSK, QAM, and PAM, as well as the optimized designs introduced earlier, to assess how the detector performs under varying combinations of SNR, INR, and Nakagami shape parameter $m$.  
The results highlight the relative performance of the proposed detector compared to the CAI and Euclidean baselines, clarify how interference statistics influence the geometry of the optimized constellations, and identify the interference regimes where jointly tailored detector–modulation design offers the most significant gains.

Fig.~\ref{fig:DR_SER} presents the SEP as a function of $\gamma$ for a fixed SNR of $30$ dB and shape parameters $m=2$ and $m=5$.  
When the interference dominates, i.e., $\gamma_I\gg\gamma_S$ and $\gamma$ is small, all detectors exhibit similarly high SEP, and the proposed ML-G metric offers only modest improvement over CAI and Euclidean baselines.  
As $\gamma$ increases and the interference becomes comparable to the signal, the advantage of the ML-G detector grows and reaches its maximum around $\gamma\approx 10$ dB.  
For $m=2$, the gains relative to both baselines remain similar because the interference amplitude is nearly Gaussian.  
For $m=5$, however, the improvement over the conventional Euclidean distance detector becomes more pronounced since the interference approaches a more concentrated, quasi-constant amplitude with a non-uniform phase distribution.  
In the high SNR regime, i.e., $\gamma_I\ll\gamma_S$, all detectors converge to the same SEP, which reflects the fact that the interference becomes effectively indistinguishable from Gaussian noise.

\begin{figure}[!t]
\centering
\begin{tikzpicture}
\begin{axis}[
    width=\linewidth,
    xlabel={SNR (dB)},
    ylabel={SEP},
    xmin=10, xmax=40,
    grid=major,
    legend pos=south east,
    legend cell align=left,
    legend style={font=\footnotesize},
    yticklabel style={
      /pgf/number format/fixed,
      /pgf/number format/precision=2,
      /pgf/number format/fixed zerofill
    },
    scaled y ticks=false,
]

\addplot+[color=blue,mark=*,mark size=2pt, mark repeat=3, line width=1.4pt,mark options={fill=blue,solid}] table[col sep=comma, x=x_dB, y=Ser_Diff]   {csv/MaxDiff_QAM_G_vs_Cai_smoothed_4.txt}; \addlegendentry{$M=4$}
\addplot+[color=orange, mark=+,mark size=3pt, mark repeat=3, line width=1.4pt] table[col sep=comma, x=x_dB, y=Ser_Diff]   {csv/MaxDiff_QAM_G_vs_Cai_smoothed_8.txt}; \addlegendentry{$M=8$}
\addplot+[color=green,mark=triangle*, mark size=2pt, mark repeat=3, mark options={fill=green}, line width=1.4pt] table[col sep=comma, x=x_dB, y=Ser_Diff]  {csv/MaxDiff_QAM_G_vs_Cai_smoothed_16.txt}; \addlegendentry{$M=16$}
\addplot+[color=red,mark=diamond*,mark size=2pt, mark repeat=3, line width=1.4pt] table[col sep=comma, x=x_dB, y=Ser_Diff]  {csv/MaxDiff_QAM_G_vs_Cai_smoothed_32.txt}; \addlegendentry{$M=32$}
\addplot+[color=purple,mark=triangle*, mark size=2pt, mark options={rotate=180}, mark repeat=3, line width=1.4pt]
                         table[col sep=comma, x=x_dB, y=Ser_Diff]  {csv/MaxDiff_QAM_G_vs_Cai_smoothed_64.txt}; \addlegendentry{$M=64$}
\addplot+[solid,color=brown,  mark=square*,mark size=2pt, mark repeat=3, line width=1.4pt,mark options={fill=brown,solid}] table[col sep=comma, x=x_dB, y=Ser_Diff] {csv/MaxDiff_QAM_G_vs_Cai_smoothed_128.txt}; \addlegendentry{$M=128$}

\end{axis}
\end{tikzpicture}
\caption{Maximum SEP difference between ML-G and CAI versus SNR across various QAM constellation orders.}
\label{fig:MaxDiff_QAM}
\end{figure}
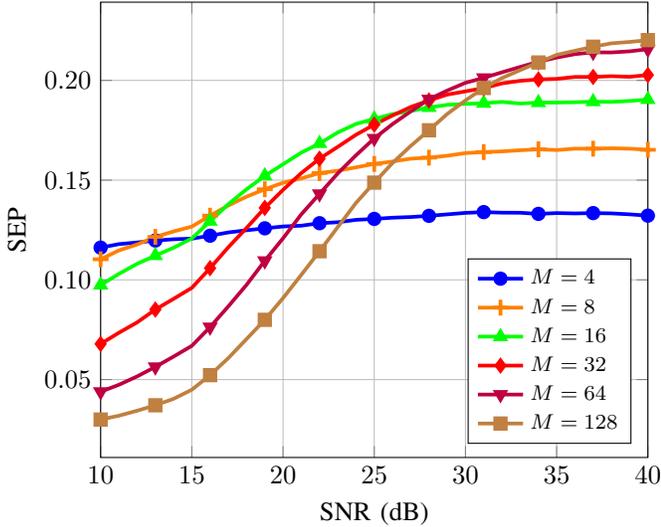

Fig. \ref{fig:MaxDiff_QAM} shows the maximum SEP improvement of the proposed ML-G detector over the CAI baseline as a function of SNR for several constellation orders. At low SNR, lower-order constellations realize the largest performance gains, reflecting their wider angular spacing and reduced sensitivity to noise. As SNR increases, the relative advantage progressively shifts toward higher-order constellations, whose finer angular resolution allows the ML-G metric to more effectively exploit the phase-dependent interference statistics. This transition highlights the interplay between constellation granularity and the degree to which the detector can capitalize on structured interference information. Overall, the results indicate that the benefit of the ML-G metric becomes increasingly pronounced when symbol separations are small and the SNR is sufficiently high for subtle phase variations to be reliably distinguished.

Fig.~\ref{fig:SER_OPT_comp} compares the SEP performance of constellations optimized for $\gamma = 0, 8,$ and $15$ dB with standard 16-PAM, 16-QAM, and 16-PSK. Among the conventional modulations, 16-PAM provides the best performance because its one-dimensional structure makes it naturally robust to phase uncertainty. As $\gamma$ increases beyond roughly $10$ dB, the constellation optimized for $\gamma = 15$ dB attains the lowest SEP, while the $\gamma = 8$ dB design offers a slight advantage in the intermediate region where interference and signal power are comparable. At very low $\gamma$, the curves nearly coincide, indicating that the channel is dominated by interference and the specific symbol geometry has little influence on detection accuracy. Overall, the results show that adapting the constellation to the interference level can provide meaningful gains, although these gains gradually diminish as the interference becomes overwhelmingly strong.

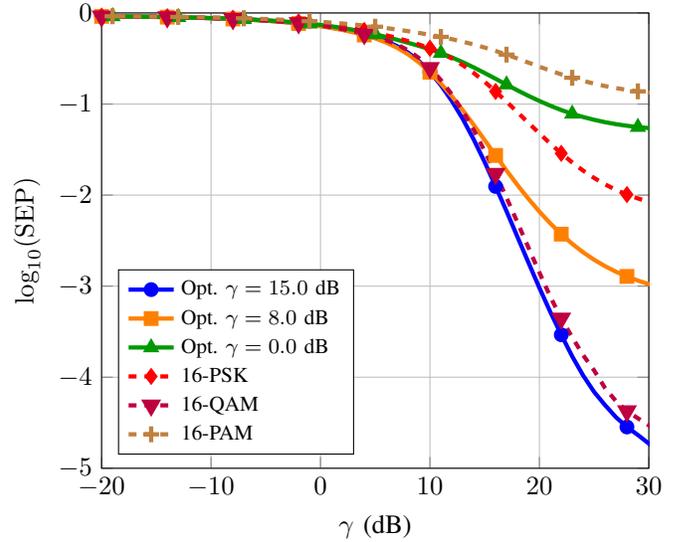
\begin{figure}[!t]
\centering
\begin{tikzpicture}
\begin{axis}[
  width=\linewidth,
  xlabel={$\gamma$ (dB)},
  ylabel={$\log_{10}(\mathrm{SEP})$},
  xmin=-20, xmax=30,
  xtick={-20,-10,0,10,20,30},
  ymin=-5, ymax=0,
  ytick={0,-1,-2,-3,-4,-5},
  grid=major,
  legend pos=south west,
  legend cell align=left,
  legend style={font=\footnotesize, draw=black, fill=white},
  unbounded coords=jump,
]

\addplot+[
  color=blue, solid, line width=1.6pt,
  mark=*, mark size=2pt, mark repeat=6, mark phase=0,
  mark options={draw=blue, fill=blue, solid}
] table[col sep=comma, x index=0,
        y expr={ln(max(\thisrow{SER},1e-12))/ln(10)}]
  {csv/SER_M16_SNR20.0dB_mu2_OptINRs_5_10_20dB_smoothed_opt5.txt};
\addlegendentry{Opt. $\gamma=15.0$ dB}

\addplot+[
  color=orange, solid, line width=1.6pt,
  mark=square*, mark size=2pt, mark repeat=6, mark phase=1,
  mark options={draw=orange, fill=orange, solid}
] table[col sep=comma, x index=0,
        y expr={ln(max(\thisrow{SER},1e-12))/ln(10)}]
  {csv/SER_M16_SNR20.0dB_mu2_OptINRs_5_10_20dB_smoothed_opt10.txt};
\addlegendentry{Opt. $\gamma=8.0$ dB}

\addplot+[
  color=green!60!black, solid, line width=1.6pt,
  mark=triangle*, mark size=2pt, mark repeat=6, mark phase=2,
  mark options={draw=green!60!black, fill=green!60!black, solid}
] table[col sep=comma, x index=0,
        y expr={ln(max(\thisrow{SER},1e-12))/ln(10)}]
  {csv/SER_M16_SNR20.0dB_mu2_OptINRs_5_10_20dB_smoothed_opt20.txt};
\addlegendentry{Opt. $\gamma = 0.0$ dB}

\addplot+[
  color=red, dashed, line width=1.6pt,
  mark=diamond*, mark size=2pt, mark repeat=6, mark phase=0,
  mark options={draw=red, fill=red, solid} 
] table[col sep=comma, x index=0,
        y expr={ln(max(\thisrow{SER},1e-12))/ln(10)}]
  {csv/SER_M16_SNR20.0dB_mu2_OptINRs_5_10_20dB_smoothed_psk.txt};
\addlegendentry{16-PSK}

\addplot+[
  color=purple, dashed, line width=1.6pt,
  mark=triangle*, mark options={rotate=180, draw=purple, fill=purple, solid},
  mark size=2.8pt, mark repeat=6, mark phase=1
] table[col sep=comma, x index=0,
        y expr={ln(max(\thisrow{SER},1e-12))/ln(10)}]
  {csv/SER_M16_SNR20.0dB_mu2_OptINRs_5_10_20dB_smoothed_qam.txt};
\addlegendentry{16-QAM}

\addplot+[
  color=brown, dashed, line width=1.6pt,
  mark=+, mark size=3.0pt, mark repeat=6, mark phase=2,
  mark options={draw=brown, fill=brown, solid}
] table[col sep=comma, x index=0,
        y expr={ln(max(\thisrow{SER},1e-12))/ln(10)}]
  {csv/SER_M16_SNR20.0dB_mu2_OptINRs_5_10_20dB_smoothed_pam.txt};
\addlegendentry{16-PAM}

\end{axis}
\end{tikzpicture}

\caption{SEP versus $\gamma$ for various constellation designs.}
\label{fig:SER_OPT_comp}
\end{figure}

\section{Conclusion}
This paper studied symbol detection and constellation design for single-carrier communication systems affected by additive interference with Nakagami-$m$ statistics, where non-circular effects alter the geometry of optimal decision regions. To address this setting, we introduced ML-G, a low-complexity detection framework that captures the impact of interference statistics while remaining suitable for practical implementation. The proposed detector induces non-uniform decision regions whose structure adapts to the interference environment and naturally reduces to conventional detection under classical fading conditions. Building on this framework, we investigated constellation design under an average-energy constraint and obtained interference-adaptive signal constellations that reflect the geometry imposed by the detection metric. Simulation results demonstrated that the proposed approach achieves consistent SEP improvements over established benchmark schemes in interference-limited regimes, while preserving robustness as operating conditions vary.


\appendices
\section{Proof of Lemma \ref{lemma:MoM}} \label{sec:appendix_MoM}
By symmetry of the PDF $g(\theta)=|{\sin 2\theta}|^{m-1}$ over the interval $[0,2\pi]$, the mean of $\Theta$ is given by
$\E[\Theta]=\pi.$
To compute the second moment, perform the change of variables $\theta=\varphi/2$ and average $\theta^2$ over one period, this yields
\begin{equation}
    \begin{aligned}\label{eq:e2steps-appendix}
    \E[\Theta^{2}]
    &=\frac{1}{16C(m)}\sum_{n=0}^{3}\int_{0}^{\pi}(\varphi^{2}+2n\pi\varphi+n^{2}\pi^{2})\sin^{m-1}d\varphi,
    \end{aligned}
\end{equation}
where the normalization constant $C(m)$ is given by
\begin{equation}
C(m)=\int_{0}^{\pi}\sin^{m-1}xdx=\sqrt{\pi}\frac{\Gamma(m/2)}{\Gamma\bigl(\tfrac{m+1}{2}\bigr)}.
\end{equation}
Define the auxiliary integrals
\begin{equation}\label{eq:Ddef-appendix}
D(m) = \int_{0}^{\pi} \varphi^{2}\sin^{m-1}\varphi d\varphi
\end{equation}
and
\begin{equation}\label{eq:Bdef-appendix}
B(m) = \int_{0}^{\pi} \varphi\sin^{m-1}\varphi d\varphi ,
\end{equation}
where the latter integral evaluates to $B(m) = \tfrac{\pi}{2} C(m)$.
Using the identities $\sum_{n=0}^{3} n = 6$ and $\sum_{n=0}^{3} n^{2} = 14$, we obtain
\begin{equation}\label{eq:e2closed-appendix}
\mathbb{E}[\Theta^{2}]
= \frac{\pi^{2}}{4} + \frac{D(m)}{4C(m)}.
\end{equation}
As can be observed, \eqref{eq:e2closed-appendix} fully characterizes the second 
moment of $\Theta$.  
For general $m>1$, the ratio $D(m)/C(m)$ does not reduce to elementary 
functions, though it can be computed numerically.

\section{Proof of Theorem \ref{prop:ML_detector}}\label{sec:appendix_ML_detector}
Consider the conditional likelihood of $Y$ given $(X,A,\Theta) = (x,a,\theta)$, which is complex Gaussian and given by
\begin{equation}\label{eq:condPDF}
  f_{Y|X,A,\Theta}(y|x,a,\theta)=\frac{1}{\pi}\exp\bigl(-|y-\sqrt{S}x-a e^{j\theta}|^{2}\bigr).
\end{equation}
Conditioning on $x$, define the residual $\rho=y-\sqrt{S}x$, with magnitude $r=|\rho|$ and phase $\phi=\arg(\rho)$.
Recenter the phase variable as $\theta=\pi+\vartheta$, which yields $\cos(\theta-\phi)=-\cos(\vartheta-\phi)$.
Approximating the phase distribution by a Gaussian with matching mean and variance, $\Theta\sim\mathcal N(\pi,\sigma_\Theta^2(m))$, the phase average evaluates to
\begin{equation}\label{eq:gauss_phase_avg}
\begin{aligned}
&\int \frac{e^{-\vartheta^{2}/(2\sigma_\Theta^{2})}}{\sqrt{2\pi}\sigma_\Theta}
e^{-2ar\cos(\vartheta-\phi)}d\vartheta\\
&= I_{0}(2ar)
+ 2\sum_{k=1}^{\infty} (-1)^{k}
e^{-\tfrac{1}{2}\sigma_\Theta^{2}k^{2}}
\cos(k\phi)I_{k}(2ar).
\end{aligned}
\end{equation}

Averaging over the interference amplitude, for $\betaN= 1+\tfrac{m}{\Omega}$, we have
\begin{equation}\label{eq:Ikm}
\begin{aligned}
I_{k,m}(r) & =
\int_{0}^{\infty} a^{2m-1} e^{-\betaN a^{2}} I_{k}(2ar)da\\
&= \frac{\Gamma\bigl(m+\tfrac{k}{2}\bigr)}{2\betaN^{m+\frac{k}{2}}k!}
r^{k}
\Fone{m+\frac{k}{2}}{k+1}{\frac{r^{2}}{\betaN}}.
\end{aligned}
\end{equation}

This result follows by substituting the series expansion $I_k(2ar)=\sum_{n\ge0}\tfrac{1}{n!\Gamma(n+k+1)}(ar)^{2n+k}$ into the defining integral, interchanging summation and integration by absolute convergence, and evaluating
\begin{equation}
\int_0^\infty a^{2(m+n)+k-1} e^{-\beta a^2}da=\tfrac{1}{2}\beta^{-(m+n)-k/2}\Gamma(m+n+k/2),
\end{equation}
where the resulting series is identified with the confluent hypergeometric function.
Collecting terms, the resulting detection rule is given by
\begin{equation}
\begin{aligned}
\hat x_{\text{ML-G}}(y)
= \argmin_{x\in\mathcal X}
\left\{|\rxx|^{2}
- \ln \mathcal{S}\left(|\rxx|,\arg(\rxx)\right)\right\},
\end{aligned}
\end{equation}
which completes the proof.

\section{Proof of Proposition \ref{proposition:existence-trunc}}
\label{sec:appendix_existence}
Consider the $k$-th term in the series defining $\mathcal{S}(r,\phi)$:
\[
T_k(r,\phi)= |w_k(\phi)| I_{k,m}(r),
\]
with
\[
|w_k(\phi)|
\le 2 e^{-\frac{1}{2}\sigma_\Theta^{2}k^{2}}.
\]
Using~\eqref{eq:Ikm},
\begin{equation}
I_{k,m}(r)
=
\frac{\Gamma\left(m+\tfrac{k}{2}\right)}{2\beta^{m+\frac{k}{2}}k!}
r^{k}
{}_1F_{1}\left(m+\frac{k}{2};k+1;\frac{r^{2}}{\beta}\right).
\end{equation}
For $r\in[0,R_{\max}]$, the hypergeometric term is $\mathcal{O}(e^{r^{2}/\beta})$
and the Gamma–factor ratio obeys
\[
\frac{\Gamma\left(m+\tfrac{k}{2}\right)}{k!}
= \mathcal{O}\left(k^{-1/2}\right).
\]
Thus,
\begin{equation}
T_k(r,\phi)
= \mathcal{O}\left(
e^{-\frac{1}{2}\sigma_\Theta^{2}k^{2}}
\frac{r^{k}}{k!}
\right),
\quad r\in[0,R_{\max}],\ \phi\in\mathbb{R}.
\end{equation}
The weight $e^{-\frac{1}{2}\sigma_\Theta^{2}k^{2}}$ ensures super-exponential decay, and therefore the series  
$\sum_{k=0}^{\infty} T_k(r,\phi)$ converges uniformly on $[0,R_{\max}]\times\mathbb{R}$.
Uniform convergence implies that for any $\varepsilon\in(0,1)$ there exists a finite integer $K$ such that
\begin{equation}
\sum_{k>K} T_k(r,\phi)\le \varepsilon,
\quad
\forall r\in[0,R_{\max}],\ \forall\phi\in\mathbb{R}.
\end{equation}
This proves the existence of a finite truncation index.

\section{Proof of Proposition \ref{proposition:min-K}}\label{sec:appendix_minK}
Starting from
\[
T_k(r,\phi)=|w_k(\phi)| I_{k,m}(r),
\]
the ratio of successive terms satisfies
\begin{equation}\label{eq:wratio-app}
\frac{|w_{k+1}(\phi)|}{|w_k(\phi)|}
\le
\exp\left[-\sigma_\Theta^{2}\left(k+\tfrac{1}{2}\right)\right].
\end{equation}
Using~\eqref{eq:Ikm}, the ratio $I_{k+1,m}(r)/I_{k,m}(r)$ becomes
\begin{equation}\label{eq:Iratio-app}
\begin{aligned}
\frac{I_{k+1,m}(r)}{I_{k,m}(r)}
=&
\frac{\Gamma\left(m+\tfrac{k+1}{2}\right)}{\Gamma\left(m+\tfrac{k}{2}\right)}
\frac{r}{(k+1)\sqrt{\beta}}\\
&\times\frac{{}_1F_{1}\left(m+\tfrac{k+1}{2};k+2;\tfrac{r^{2}}{\beta}\right)}
{{}_1F_{1}\left(m+\tfrac{k}{2};k+1;\tfrac{r^{2}}{\beta}\right)}.
\end{aligned}
\end{equation}
Bounding each factor using standard properties of $\Gamma$ and ${}_1F_1$ 
\begin{equation}
\frac{I_{k+1,m}(r)}{I_{k,m}(r)}
\le
\frac{r}{(k+1)\sqrt{\beta}}
\sqrt{m+\tfrac{k}{2}+\tfrac{1}{2}}
\exp\left(\tfrac{r^{2}}{\beta}\right).
\end{equation}
Combining with~\eqref{eq:wratio-app} we obtain
\begin{equation}\label{eq:qk-def-app}
\begin{aligned}
\frac{T_{k+1}(r,\phi)}{T_k(r,\phi)}
\le&
\exp\left[-\sigma_\Theta^{2}\left(k+\tfrac12\right)\right]
\frac{r}{(k+1)\sqrt{\beta}}
\sqrt{m+\tfrac{k}{2}+\tfrac12}
\\
&\times\exp\left(\tfrac{r^{2}}{\beta}\right)
= q_k(r).
\end{aligned}
\end{equation}
For $q_K(r)<1$ and using the geometric-tail bound,
\begin{equation}
\sum_{k>K}T_k(r,\phi)
\le
\frac{T_{K+1}(r,\phi)}{1-q_K(r)}.
\end{equation}
Bounding $T_{K+1}(r,\phi)$ with ${}_1F_1(a;b;z)\le e^{z}$ yields
\begin{equation}
T_{K+1}(r,\phi)
\le
\frac{
e^{-\frac{1}{2}\sigma_\Theta^{2}(K+1)^{2}}
r^{K+1}
\exp(r^{2}/\beta)
\Gamma\left(m+\tfrac{K+1}{2}\right)
}{
\beta^{m+\frac{K+1}{2}}(K+1)!
}.
\end{equation}
Defining $W(r,\beta,\sigma_\Theta,K)$ as in \eqref{defW},
we obtain
\begin{equation}
\sum_{k>K}T_k(r,\phi)
\le
\frac{
W(r,\beta,\sigma_\Theta,K)
\Gamma\left(m+\tfrac{K+1}{2}\right)
}{
(1-q_K(r)) \beta^{m+\frac{K+1}{2}}(K+1)!
}.
\end{equation}
Forcing this bound to be less than $\varepsilon$ over
$r\in[0,R_{\max}]$ gives the minimum truncation index $K$ guaranteed to satisfy the bound, which concludes the proof.

\bibliographystyle{IEEEtran}
\bibliography{bibliography.bib}

\end{document}